%% file: main.tex
\documentclass{article}
\usepackage{graphicx} %
\usepackage{amsmath}
\usepackage{amssymb}
\usepackage{amsthm} %
\usepackage{amsthm}
\usepackage{graphicx}
\theoremstyle{definition}
\newtheorem{definition}{Definition}
\usepackage{color, braket}
\definecolor{nicepink}{rgb}{1.0, 0.33, 0.64}
\usepackage{booktabs}
\usepackage{array}
\usepackage{siunitx}
\usepackage[table]{xcolor}
\definecolor{lightgray}{gray}{0.85}
\usepackage{appendix}
\usepackage{subcaption}
\usepackage{graphicx}
\usepackage{tikz}

\usepackage{calc}
\usepackage{accents}
\newcommand{\doubletilde}[1]{\tilde{\raisebox{0pt}[0.85\height]{$\tilde{#1}$}}}

\usepackage{tikz}
\usepackage{xcolor}
\usepackage{quantikz}
\usepackage{hyperref}
\usepackage{xcolor}
\usepackage{quantikz}
\usepackage{pdflscape}   %
\usepackage{amssymb}
\newtheorem{theorem}{Theorem}

\newtheorem{note}[theorem]{Note}
\usepackage{algpseudocode}
\usepackage{algorithm}
\usepackage{authblk}

\usepackage[a4paper, margin=2.5cm]{geometry} %

\newcommand{\Span}{\text{Span}}
\newcommand{\mc}[1]{\mathcal{#1}}
\newcommand{\eff}{\text{eff}}
\newcommand{\prep}{\text{Prep}}
\newcommand{\meas}{\text{Meas}}

\graphicspath{{figures/}{./}} %

\definecolor{teal}{rgb}{0.0, 0.5, 0.5}
\definecolor{etonblue}{rgb}{0.59, 0.78, 0.64}
\usetikzlibrary{
  arrows.meta,      %
  positioning,      %
  calc,             %
  backgrounds,      %
  fit,              %
  shapes,           %
  shadows.blur      %
}

\definecolor{ink}{RGB}{45,45,45}
\definecolor{accent}{RGB}{0,102,204}
\definecolor{soft}{RGB}{230,236,245}
\definecolor{soft2}{RGB}{242,242,242}

\tikzset{
  every picture/.style={
    line cap=round, line join=round,
  },
  node/.style={
    font=\sffamily\small,
    align=center,
  },
  box/.style={
    node,
    draw=ink!60,
    fill=white,
    rounded corners=2pt,
    inner sep=6pt,
    minimum width=2.6cm,
    semithick,
    blur shadow, %
  },
  emph/.style={
    box,
    draw=accent!70!black,
    fill=accent!6,
  },
  link/.style={
    -{Latex[length=2.2mm]},
    semithick,
    draw=ink!80,
  },
  dashedlink/.style={
    link, dashed, draw=ink!50
  },
  label/.style={
    node, fill=soft2, draw=ink!15, rounded corners=1.5pt, inner sep=3pt
  }
}
\begin{document}

\title{Hybrid Method of Efficient Simulation of Physics Applications for a Quantum Computer}
\author[1,2]{Carla Rieger\thanks{These authors contributed equally to this work.}}
\author[3]{Albert T. Schmitz\protect\footnotemark[\value{footnote}]}
\author[4]{Gehad Salem} 
\author[3]{Massimiliano Incudini}
\author[1]{Sofia Vallecorsa} 
\author[3]{Anne Y. Matsuura} 
\author[1]{Michele Grossi}
\author[3]{Gian Giacomo Guerreschi} 

\affil[1]{European Organization for Nuclear Research (CERN), Geneva 1211, Switzerland}
\affil[2]{School of Engineering and Design, Technical University of Munich, Germany}
\affil[3]{Intel Labs, Intel Corporation, 2200 Mission College Blvd, Santa Clara, CA 95054, United States of America}
\affil[4]{The American University in Cairo, AUC Avenue, New Cairo 11835, Egypt}

\maketitle

\begin{abstract}
    Quantum chemistry and materials science are among the most promising areas for demonstrating algorithmic quantum advantage and quantum utility due to their inherent quantum mechanical nature. Still, large-scale simulations of quantum circuits are essential for determining the problem size at which quantum solutions outperform classical methods.
    In this work, we present a novel hybrid simulation approach, forming a hybrid of a fullstate and a Clifford simulator, specifically designed to address the computational challenges associated with the time evolution of quantum chemistry Hamiltonians. Our method focuses on the efficient emulation of multi-qubit rotations, a critical component of Trotterized Hamiltonian evolution. By optimizing the representation and execution of multi-qubit operations leveraging the Pauli frame, our approach significantly reduces the computational cost of simulating quantum circuits, enabling more efficient simulations.
    Beyond its impact on chemistry applications, our emulation strategy has broad implications for any computational workload that relies heavily on multi-qubit rotations. By increasing the efficiency of quantum simulations, our method facilitates more accurate and cost-effective studies of complex quantum systems. We quantify the performance improvements and computational savings for this emulation strategy, and we obtain a speedup of a factor $\approx 18$ ($\approx 22$ with MPI) for our evaluated chemistry Hamiltonians with 24 qubits. Thus, we evaluate our integration of this emulation strategy into the Intel Quantum SDK, further bridging the gap between theoretical algorithm development and practical quantum software implementations.
\end{abstract}

\section{Introduction}
\begin{figure}[t!]
    \centering
    \includegraphics[width=0.75\linewidth]{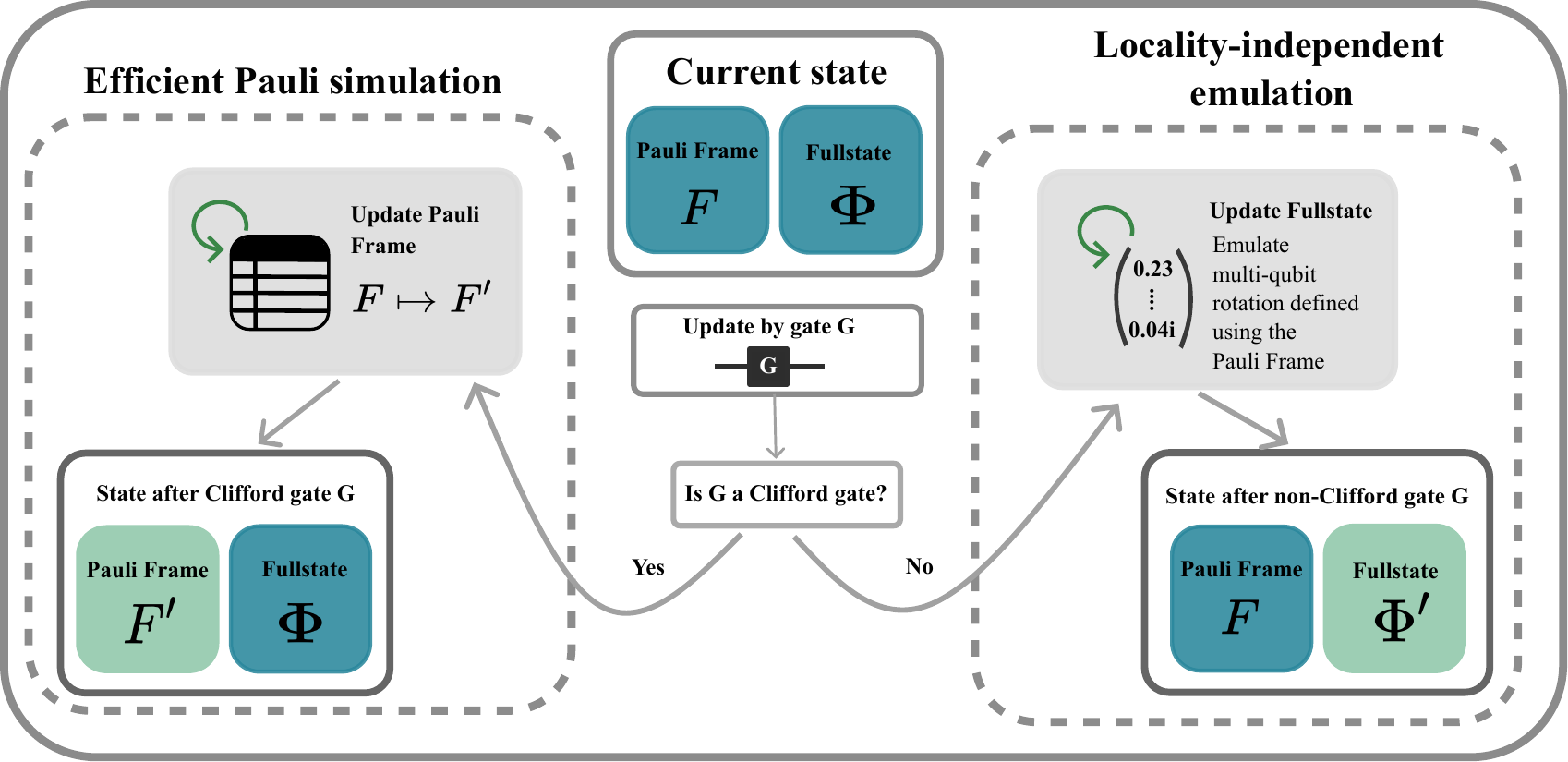}
    \caption{Workings of the hybrid approach combining Clifford and fullstate simulation. This includes, on the one hand, the efficient simulation of Pauli gates using Pauli frame tracking and the emulation of non-Clifford gates by utilizing the Pauli frame as a lookup table, which allows for representing multi-qubit Pauli rotations as single-qubit Pauli rotations with a modified rotation axis. The updated element is highlighted in green, respectively.}
    \label{fig:schematic}
\end{figure}
There is a growing consensus that applications in chemistry and materials science will be among the first to deliver quantum algorithmic advantage, in particular for quantum simulation tasks~\cite{georgescu2014quantum, motta2022emerging, kivlichan2020improved, campbell2021early, lee2023evaluating, o2016scalable}. Their formulation naturally leverages multi-qubit operations in the form of rotations $\exp(-i \frac{\theta}{2} P)$ generated by a tensor product of Pauli matrices, $P$, acting non-trivially on a subset of the qubits. This primitive is at the core of Trotter-Suzuki-based early fault-tolerant simulations, as well as the approximate quantum optimization algorithm~\cite{farhi2014quantum} and variational quantum eigensolvers~\cite{kandala2017hardware} for near-term devices.
It is crucial to emphasize the significance of classical simulation of complex quantum systems and dynamics in selecting promising quantum applications, verifying the performance of quantum computers, and identifying the break-even point at which the capabilities of a quantum device surpass those of classical computers. 

The quantum circuit framework expresses quantum programs in terms of single- and two-qubit gates, often obtained as decompositions of multi-qubit operations into logically equivalent but longer circuits made of simpler gates. This representation also reflects how most quantum computers are expected to work, with physical interactions often limited to two-body terms. While classical simulators process quantum circuits gate-by-gate, this is not strictly necessary to obtain the result of the computation. A family of programs known as quantum computer \textit{emulators}~\cite{haner2016high} leverages higher-level subroutines to reproduce the respective quantum logic and, by replacing gate-to-gate calculations with optimized shortcuts, may achieve substantial speed-ups. Intuitively, an emulator is only required to produce the same result as an ideal quantum computer without forcefully passing through all its intermediate configurations. Therefore, the requirements for a \textit{simulator} are stricter because every single gate-based operation is calculated \textit{exactly}, and noise effects or errors may be included in the calculation. In the context of this work, both the simulator and the emulator are executed on classical computing hardware.

In this work, we present a quantum computer emulator that can drastically speed up multi-qubit rotations, rendering them as \emph{resource-intensive as single-qubit rotations}. The improvement is most significant for Hamiltonians with high locality terms, and thus is relevant for Hamiltonians with a relatively high number of qubits. This speed-up requires two ingredients: a simulator backend with native multi-qubit operations, and a way to reconstruct them from circuits composed of 1- and 2-qubit gates.
Solving the latter aspect allows integrating this novel technique into existing quantum computing frameworks without the need to modify their current quantum intermediate representation (QIR). In fact, current QIRs are often centered around single- and two-qubit operations, making it difficult to deal directly with multi-qubit rotations without a major update to the framework itself.
We overcome this limitation by designing a new hybrid platform, which fuses the multi-qubit rotation fullstate simulator with a Clifford simulator. The hybrid simulator emulates all Clifford operations by updating the Pauli frame and applies non-Clifford operations as multi-qubit rotations, which are implemented in the fullstate simulator. In this approach, the Pauli frame handles the Clifford portion of the low-level decomposition of multi-qubit rotations without affecting the state of the quantum system and at a negligible computational cost.

We implemented our novel emulation approach in the Intel QSDK \cite{guerreschi2020intel, paykin2023pcoast} as the Clifford Fullstate Hybrid Simulator (CFHS) and benchmark its performance on Hamiltonian simulation workloads, with both artificially generated and quantum chemistry Hamiltonians. In this context, we generate random Hamiltonians with a defined locality on the one hand and select a set of real-world chemistry Hamiltonians ranging from 14 to 24 qubits on the other hand. Based on those Hamiltonian workloads, we investigate the scaling of runtime and compilation time of our novel emulation strategy and compare it to the previous version of the Intel Quantum Simulator (IQS) implementation. We observe a substantial decrease in the runtime of CFHS compared to the original IQS version, which becomes more pronounced with increasing locality and number of qubits, allowing for higher locality terms. In the exemplary case of chemistry Hamiltonians with 24 qubits and a mean locality of approximately $11$ we obtain a speedup of a factor of $\approx 18$ and $\approx 22$ when leveraging the message-passing interface (MPI). The runtime of CFHS does not depend on the locality of the terms in the Hamiltonian, while the linear scaling of the running time in the number of terms and exponential scaling in qubit number is preserved. By tracking the compilation time of CFHS and IQS, we confirm that the computational cost is not shifted from runtime to compilation time. 

This work is structured as follows. We start by summarizing the details on Clifford simulation in Section~\ref{sec:background}. Then we depict how multi-qubit rotations can be expressed in a fullstate simulator in Section~\ref{sec:fullstatesim} and conclude with the presentation of the Clifford fullstate hybrid simulator (CFHS) in Section~\ref{sec:hybridsim}, where the Pauli frame is used as a lookup table to efficiently emulate multi-qubit rotations. We benchmark our novel CFHS against the previous IQS version using random Hamiltonians and chemistry-specific molecular Hamiltonians, and present our findings in Section~\ref{sec:benchmark}. Then, we conclude this work with Section~\ref{sec:conclusion}.

\vspace{1cm}
\input{sec_background}

\input{sec_fullstate}

\input{sec_cfsh}

\input{sec_bench}

\section{Conclusion}\label{sec:conclusion}
\input{sec_conclusion}

\section*{Code availability}
The dataset and code supporting the findings of this work are available on \href{https://github.com/carlasophie/CFHS_simulation}{GitHub}.

\section*{Acknowledgments}
CR thanks Robert König for fruitful discussions regarding symplectic transvections, and Nicolas Sawaya and Daan Camps for their support with the HamLib dataset. CR is sponsored by the Wolfgang Gentner Programme of the German Federal Ministry of Education and Research (grant no. 13E18CHA). CR, SV, and MG are supported by the CERN through the CERN Quantum Technology Initiative.

\appendix
\input{sec_appendix}

\bibliographystyle{plain}
\bibliography{mybib}

\end{document}

%% file: sec_background.tex
\section{Elements of  Clifford simulation and notation} \label{sec:background}

The development of simulation techniques has a rich history within the study of quantum computation and the classical simulability of quantum systems. In this section, we focus on Clifford simulation, and in the next one, we describe how to implement multi-qubit rotations in fullstate simulators. The two concepts will be combined in Section~\ref{sec:hybridsim}.

Clifford simulation originates from the structure of the Clifford group, and plays a crucial role in quantum error correction~\cite{kitaev1997quantum}, quantum teleportation~\cite{bennett1993teleporting}, and, most prominently here, the efficient classical simulation of certain quantum circuits~\cite{gottesman1997stabilizer}. In detail, the Gottesman-Knill theorem states that a quantum circuit consisting only of Clifford gates from Eq.~\eqref{eqn:clifford_group} and qubits initially prepared and measured in the computational basis is classically simulable in polynomial time~\cite{gottesman1998heisenberg}. The ability to efficiently simulate Clifford circuits has had profound implications for quantum error correction (e.g., the surface code~\cite{kitaev1997quantum, kitaev2003fault}), and benchmarking quantum hardware~\cite{magesan2011scalable, knill2008randomized}. Interestingly, while Clifford circuits can, for example, generate highly entangled states, e.g., maximally entangled Bell states, they do not provide exponential (computational) speedup over classical computation. The Gottesman-Knill theorem highlights the limitations of quantum computational advantage within the Clifford framework while enabling practical simulation methods for an essential class of quantum circuits that can be extended to universality by adding a T-gate.
In the following, we will first define the Pauli and Clifford groups, then proceed with the definition of Pauli masks, and conclude with more details on simulating Clifford circuits. 

\subsection{Pauli and Clifford Group}

Aiming at defining the Clifford group, we recall the definition of Pauli matrices:
\begin{equation}
    \sigma_0 = \mathbb{I} , \, \, \sigma_1 = X = \begin{pmatrix}
        0 & 1 \\
        1 & 0
    \end{pmatrix}\, , \, \,
    \sigma_2 = Y = \begin{pmatrix}
        0 & -i \\
        i & 0
    \end{pmatrix}\, , \, \,
    \sigma_3 = Z = \begin{pmatrix}
        1 & 0 \\
        0 & -1
    \end{pmatrix}\, . \, \,
    \label{eqn:pauli_mat}
\end{equation}
Note that $XYZ = i \mathbb{I}$ and that for all $j=0,1,2,3$ we have $\sigma_j^2 = \mathbb{I}$. As a consequence, $\sigma_i$ and $\sigma_j$ either commute or anti-commute. We define the anti-commutation flag function as follows:
\begin{equation}
    \lambda(\sigma_i, \sigma_j) = \begin{cases}
        0, & \text{if } \sigma_i, \sigma_j \text{ commute} \\
        1, & \text{otherwise} \, .
    \end{cases} 
\end{equation}
Based on the Pauli matrices in Eq.~\eqref{eqn:pauli_mat}, we define the Pauli group $\mathcal{P}_n$ on $n$ qubits (see e.g.~\cite{nielsen2010quantum}) and in  the one-qubit case $n=1$, we have:
\begin{equation}
    \mathcal{P}_1 =\{ \pm \mathbb{I}, \pm i \mathbb{I}, \pm X, \pm i X,  \pm Y, \pm i Y, \pm Z, \pm i Z \} \equiv \langle X, Y, Z \rangle \, .
    \label{eqn:pauli_group}
\end{equation}
With the respective multiplicative factors $\pm 1$ and $\pm i$, the group~$\mathcal{P}_1$ defined in Eq.~\eqref{eqn:pauli_group} is closed under multiplication. The group $\mathcal{P}_1$ can be generalized for $n$ qubits by taking the $n$-folded tensor product of matrices~$\{ \mathbb{I}, X, Y, Z \}$ %
and including the multiplicative factors $\pm 1$ and $\pm i$. The anticommutation flag function is extended to the $n$-qubit case, 
\begin{equation}
    \lambda(P_1, P_2) = \begin{cases}
        0, & \text{if } P_1, P_2 \text{ commute} \\
        1, & \text{otherwise}
    \end{cases} \, \, \,  = \left(  \sum_{j=0}^{n-1} \lambda(P^1_j, P^2_j) \right) \text{ mod } 2
\end{equation}
Then, the Clifford group $\mathcal{C}_n$ is defined as the group of unitaries normalizing the Pauli group $\mathcal{P}_n$~\cite{gottesman1998heisenberg}: 
\begin{equation}
    \mathcal{C}_n = \{A \in U_{2^n} | \, A \, P \, A^{\dagger} \, \in \, \mathcal{P}_n  \,\, \forall \,  P \in \mathcal{P}_n \} \, \backslash U(1) \, . \label{eqn:clifford_group}
\end{equation}
The Clifford group is generated by the gate set $\langle \, H, S, CNOT \,\rangle$, with the gates being defined as 
\begin{equation}
    H = \frac{1}{\sqrt{2}} \begin{pmatrix}
        1 & 1 \\
        1 & -1
    \end{pmatrix}, \,  \,
    S = \begin{pmatrix}
        1 & 0 \\
        0 & i
    \end{pmatrix}, \,  \,
    CNOT =  \begin{pmatrix}
        1 & 0 \\
        0 & 0
    \end{pmatrix} \otimes \mathbb{I} \, + \,
    \begin{pmatrix}
        0 & 0 \\
        0 & 1
    \end{pmatrix} \otimes X \, .
\end{equation}
\subsection{Pauli Masks}

We identify $n$-bit integers between $0$ and $2^{n}-1$ with their $n$-bit string following the computer science notation, with the leftmost bit being the most significant one. Then, the $n$-bit $X$-mask of an $n$-qubit Pauli string $P = \otimes_{j=0}^{n-1} \, \sigma_{i_j}$ is defined as
\begin{equation}
    m_X(P) := b_{n-1} \ldots b_0 \, , \qquad \text{with  } b_j = \begin{cases}
        1 \, , & \text{if } \sigma_{i_j} = X (=\sigma_1) \\
        0 \, , & \text{otherwise} \, . 
    \end{cases} 
    \label{def:xmask}
\end{equation}
The $Y$-mask and $Z$-mask of $P$, respectively denoted by $m_Y (P)$ and $m_Z (P)$, are defined analogously. We denote the addition modulo~2 (i.e., a XOR) between a pair of bits by $\oplus$. Then, the bit-wise XOR between a pair of $n$-bit strings is denoted by $\odot$ to differentiate it from a XOR: 
\begin{align}
    \cdot \odot \cdot &: \{0,1\}^n \times \{0,1\}^n \to \mathbb{N} \\  
    a \odot b & = (a_{n-1} \ldots a_0) \odot (b_{n-1} \ldots b_0) = \sum_{j=0}^{n-1} (a_j \oplus b_j) \, 2^j.
    \label{def:bitwise_xor}
\end{align}
As usual, the Hamming weight of a $n$-bit binary string, denoted by~$|\,\cdot\,| \, $, is defined as:
\begin{equation}
    |b| = |b_{n-1} \ldots b_0| = \sum_{j=0}^{n-1} b_j \, . 
    \label{def:hamming_weight}
\end{equation}
\subsection{Details on Clifford Simulation}\label{sec:clifsim}

Going more into the details, Clifford simulation is highly efficient because every stabilizer state acting on $n$ qubits only requires storing $n$ multiplicatively-independent Pauli operators, each of which is at worst $\mc O(n)$ to store. These operators generate the so-called stabilizer group, which uniquely defines the state as the simultaneous $(+1)$-eigenstate of all elements of the commutative group. As the typical fiducial initial state $\ket{0}\equiv\ket{0}^{\otimes n}$ is the stabilizer state of the stabilizer group $\Span \left(\{Z_j\}_{j<n}\right)$, and Clifford operators and gates, by definition, transforms Pauli operators to other Pauli operators, the action of a Clifford on the stabilizer state translates to the transformation of the initial Z-basis stabilizer set under conjugation. For localized Clifford gates, a Pauli operator can be updated under the action of the Clifford gate in $\mc O(1)$, and thus each gate update is at worst $\mc O(n)$. Keeping track of the stabilizer group is sufficient for purely unitary operations, but one also needs to extract and simulate Pauli measurements classically. It is easy to show that the expectation of a Pauli operator $P$ with respect to a stabilizer state is entirely dependent on whether $P$ is in the stabilizer group up to a phase, or not. If $P$ modulus of the phase is in the group, the expectation value is the residual phase. This translates to a measurement outcome of $\pm 1$ for a measurement of $P$ based on its phase relative to the group. If $P$ is not in the group modulus of the phase, then the expectation value is $0$, resulting in a mixed measurement outcome. So to evaluate the measurement outcome, one needs the ability to expand $P$ as a product of elements of the stabilizer set. Though other methods exist, the most efficient one is to also track the conjugation of a set of \emph{destabilizers}, conventionally starting with $\{X_j\}_{j<n}$. The resulting set of $2n$ operators, is referred to here as the \emph{Pauli frame}\footnote{The more typical term for this object is the \emph{Pauli Tableau}, but we use ``frame'' here to emphasize the frame property of Eq.~\eqref{eq:expand}.}, which for our purposes we will arrange as a $n \times 2$ matrix:
\begin{align}\label{eq:framemat} 
F = \begin{pmatrix}
\eff Z_0 & \eff X_0 \\
\vdots & \vdots \\
\eff Z_{n-1} & \eff X_{n-1}
\end{pmatrix} \, ,
\end{align} 
\noindent
where $\eff Z_j, \eff X_j$ is the result of conjugating the initial stabilizers and destabilizers by the Clifford unitary. We also use $\eff Y_j = -i \, \eff Z_j * \eff X_j$ and $F_{origin}$ to represent the \emph{origin} frame, i.e., the frame containing all single-qubit $Z$ and $X$ operators in the row corresponding to their qubit index. The Pauli frame has the property that any Pauli operator can be expanded as a product of its elements via the formula, 
\begin{align}
P \propto \prod_{j<n} (-i)^{\lambda(P, \, \eff X_j) \oplus \lambda(P, \, \eff Z_j)} \, \eff Z_j^{\lambda(P, \,  \eff X_j)} \,  \eff X_j^{\lambda(P, \, \eff Z_j)} \, . \label{eq:expand}
\end{align}
So if $\{\eff Z_j\}_{j<n}$ represents the set of stabilizers of our stabilizer state, the expansion of $P$ mod phase is a matter of computing $\lambda(P, \eff X_j)$ and $\lambda(P, \eff Z_j)$, each of which requires at worst $\mc O(n)$ to compute, implying a $\mc O(n^2)$ cost to simulate a Pauli measurement.\footnote{We note that a measurement requires more than just an assessment of the measurement outcome but also state collapse in the case that the measurement does not commute with the stabilizer group. We avoid discussing how this state collapse is implemented for Clifford simulation, as it is unnecessary for our purposes, where state collapse is simulated on the fullstate side of the hybrid simulation discussed in the text. } 

\begin{figure}[h!]
    \centering
    \includegraphics[width=0.5\linewidth]{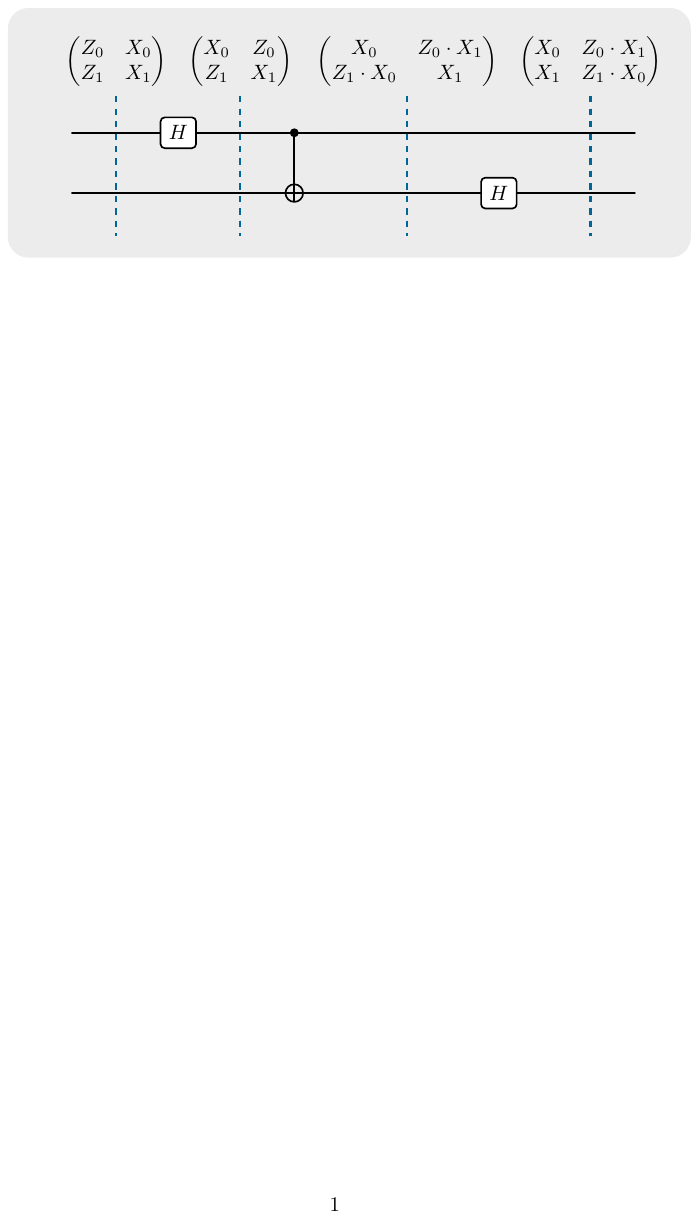}
    \caption{Demonstration of the sequential updates applied to the respective Pauli frame as defined in Eq.~\eqref{eq:framemat} under the gate-by-gate action of the Cliffords. Thus, by tracking updates to the Pauli frame, we can efficiently simulate this Clifford circuit.}
    \label{fig:cliffordupdates_tableau}
\end{figure}

%% file: sec_fullstate.tex
\section{Multi-qubit rotations in a fullstate simulator} \label{sec:fullstatesim}

The intuitive idea for implementing efficient multi-qubit rotations in a fullstate simulator stems from the observation that computational basis states are updated in pairs, regardless of the multi-qubit Pauli operator that generates the rotation. The only difference between one-qubit and multi-qubit rotations is in the pattern of such pairs. Consider a $n$-qubit system and the multi-qubit Pauli operator $P$ (in the following, we will refer to it simply as a \textit{Pauli}). Including the identity, $P$ can be seen as a tensor product of $n$ Pauli matrices:
\begin{equation}
    P = \bigotimes_{j=0}^{n-1} \sigma_{i_j} \, ,
\end{equation}
with $i_j \in \{0,1,2,3\}$. %
Now, consider an arbitrary computational basis state:
\begin{equation}
    \ket{k} = \bigotimes_{j=0}^{n-1} \ket{k_j} \, ,
\end{equation}
with $\{k_j\}_j$ being the bits in the binary representation of $k \in \{0, 1, 2, \dots, 2^n-1\}$ such that the lowest significance bit is associated with the state of qubit 0, as usually assumed in the Computer Science~(CS) community in contrast to the Quantum Information community that often adopts the opposite convention. Here, we follow the CS convention since, among other things, it is at the basis of the Intel Quantum Simulator, \emph{i.e.} of the software we used to implement the multi-qubit rotations.\\
\noindent
Denoting the bit-wise XOR of two $n$-bit string as in Eq.~\eqref{def:bitwise_xor}
and the Hamming weight of $n$-bit string as in Eq.~\eqref{def:hamming_weight},
the action of $P$ on $\ket{k}$ can be expressed as:
\begin{align}
    P \ket{k} =& \bigotimes_{j=0}^{n-1} \sigma_{i_j} \ket{k_j} \\
    =& \,(i)^{|m_Y(P)|} (-1)^{|m_Y(P) \, \odot \, k| + |m_Z(P) \, \odot \, k|} \ket{k \odot m_X(P) \odot m_Y(P)} \\
    \equiv&\,  e^{i \varphi_P(k)} \ket{f_P(k)} \, ,
\end{align}
where we used the definition of  
\begin{align}
    f_P(k) := k \, \odot \, m_X(P) \, \odot \, m_Y(P) \, ,\nonumber
\end{align} and
\begin{align}
\varphi_P(k) := \tfrac{\pi}{2} |m_Y(P)| + \pi (|m_Y(P) \, \odot \, k| + |m_Z(P)) \, \odot \, k|) \, . \nonumber
\end{align}
Therefore, the action of the multi-qubit rotation $R_P(\theta)=e^{-i \theta P/2}$ on a generic computational basis state~$\ket{k}$ can be expressed as:
\begin{align}
    R_P(\theta) \ket{k} =& \exp{(-i \theta P/2)} \ket{k} \\
    =& \cos{(\theta/2)} \ket{k} -i \sin{(\theta/2)} P \ket{k} \nonumber \\
    =& \cos{(\theta/2)} \ket{k} -i \sin{(\theta/2)}
    e^{i \varphi_P(k)} \ket{f_P(k)} \, .
\end{align}
The initial observation that computational basis states can be updated in pairs comes from the fact that 
\begin{align}
f_P(f_P(k)) = k \, .
\end{align}

We describe the update rule for the amplitude of generic qubit states (represented as an array of complex numbers in the fullstate simulator) in Appendix~\ref{sec:appendix_update_rule_multiqubit_rotation}.
The implementation in a fullstate simulator is relatively straightforward.
The situation becomes slightly more complicated when distributed storage and computation are utilized to accelerate simulations. As a concrete scenario, we consider the Intel Quantum Simulator~\cite{guerreschi2020intel,smelyanskiy2016qhipster}, a high-performance computing software designed for simulating quantum computers.
In this case, the complex vector storing the state amplitudes is distributed across the local memory of multiple MPI processes, and one needs to communicate the relevant part of the state before updating the amplitudes.
By dividing the state into equally large portions across a number of processes that is a power of 2, not only are the amplitudes updated in pairs, but the ``partner'' of every amplitude in process $j$ either belongs to the same process $j$ or to the partner process $j^\prime$ (i.e. communication between pairs of processes\footnote{In the language of distributed fullstate simulators, $j^\prime=f(j, \tilde P)$ with $\tilde P$ being the restriction of $P$ to the global qubits only.} is sufficient).
This simplifies the MPI communication pattern and maintains an efficient implementation.

%% file: sec_cfsh.tex
\section{Clifford-Fullstate Hybrid Simulator} \label{sec:hybridsim}

\subsection{Pauli Frame as a Lookup Table for Multi-qubit Rotations and Measurements}

The Pauli frame is a more interesting object beyond Clifford simulation. In particular, it is a unique representation of the Clifford unitary, up to a global phase. That is, any Clifford unitary mod phase can be represented by a unique Pauli frame and any collection of $2n$ Pauli operators arranged as in Eq.~\eqref{eq:framemat} and which satisfy $\lambda(\eff Z_i, \eff Z_j) = \lambda(\eff X_i, \eff X_j) = 0$, $\lambda(\eff Z_i, \eff X_j) = \delta_{ij}$ represents a unique Clifford unitary. However, this requires one to fix the interpretation. The conventional interpretation for Clifford simulation, which we deem the \emph{forward interpretation}, asserts that a frame $F$ maps to a Clifford unitary $U$, via the relation $F= UF_{origin} U^\dagger$, where the conjugation is entry-wise. However, there is another interpretation, the \emph{backward interpretation}, which asserts that $F = U^\dagger F_{origin} U$. Both interpretations can be applied to Clifford simulation, and although similar in form, the major difference between using the backward interpretation and the forward interpretation lies in how the frame is updated by a gate. Instead of simply conjugating each element by the gate, each gate has an update rule whereby the new entries of the frame are products of the previous frame. Where for forward interpretation, each gate was a $\mc O(1)$ update to a $\mc O(n)$ entries, in backward interpretation, most gates are an $\mc O(n)$ update (to multiply at most two arbitrary Pauli operators) to $\mc O(1)$ number of entries. Thus, the cost for gate updates is the same in both interpretations, but there are some advantages for backward updates in the case of sparse Pauli operator representations, as it preserves the sparsity.

The advantage of the backward interpretation is that it makes translating single-qubit rotations and measurements to their multi-qubit equivalents in the context of Clifford gates trivial. To demonstrate this, suppose we have a sequence of Clifford gates which collectively represent the unitary $U$ followed by a rotation gate $R_Z(i,\theta)$ applied to qubit $i$. We can effectively commute $U$ past the rotation at the cost of transforming the rotation axis to that of a multi-qubit rotation:
\begin{align}
R_Z(i, \theta) U = U U^\dagger R_Z(i, \theta) U = U R_{\eff Z_i} (\theta). \label{eq:single_to_multirot}
\end{align}
where $\eff Z_i$ is the $(i^{th}, 0)$ entry of a Pauli frame representing $U$ in the backward interpretation. So the Pauli frame in the backward interpretation represents a ``lookup'' table for how single-qubit operations translate in the context of the Clifford unitary that came before it. This is used to translate circuits to an abstract Pauli-based graphical representation for the same purpose of optimization and analysis in PCOAST~\cite{paykin2023pcoast, schmitz2023optimization}, but in our context, we marry this idea with those of Section~\ref{sec:fullstatesim} for the purposes of simulation.

\begin{figure}[h!]
    \centering
    \includegraphics[width=0.5\linewidth]{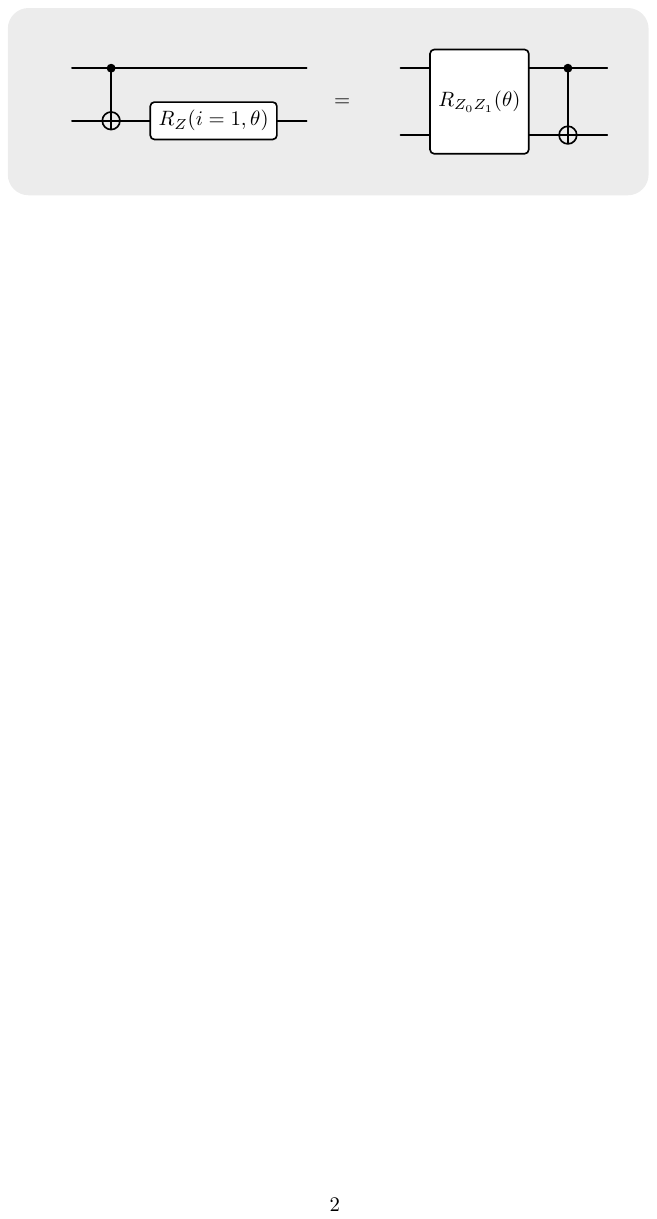}
    \caption{Illustrating the commutation behavior as described in Eq.~\eqref{eq:single_to_multirot}, we can commute the Clifford unitary $U$ (here given by a single CNOT gate as an example) past the rotation, translating the single qubit rotation to a multi-qubit one. }
    \label{fig:single_to_multirot}
\end{figure}

\subsection{Description of the Clifford-fullstate hybrid simulation}

We now have all the tools to describe the Clifford-fullstate hybrid simulation (CFHS). We write the CFHS as $(F, \Phi)$, where $F$ is a Pauli frame representing a Clifford unitary $U$ in the \textit{backward} interpretation, and $\Phi =\{\phi_k\}$ is a fullstate simulator representing the state $\ket{\phi}$ in the computational basis. The actual quantum state which $(F, \Phi)$ represents is 
\begin{align}
(F, \Phi) \simeq \ket{\psi} = U\ket{\phi} \, .   \label{eq:cfhsstate}
\end{align} 

The initial state of the simulator is $(F, \Phi) \gets (F_{origin}, \delta_{k,0})$ representing the fiducial 0 state, $\ket{\psi} = \ket{0}$. When the simulator is updated by a gate $g$, we follow the flow diagram of Figure~\ref{fig:gateupdateflow}. Most importantly, if~$g$ is a Clifford gate, it is used to update $F$, where any non-Clifford gate is used to update the fullstate simulator with the appropriate multi-qubit operation as provided by the Pauli frame. On the time scales of an update to the fullstate simulator, updates to the frame are effectively free, and as updates to the fullstate simulator for multi-qubit operations are flat with respect to the support of the Pauli operator, the simulation time for CFHS scales as the number of non-Clifford gates. Depending on the algorithm in question, this results in an $\mc O(1)$ to $\mc O(n)$ speed-up for an only $\mc O(n^2)$ \textit{additive} increase in memory, which is vanishingly small compared to the $\mc O(2^n)$ memory footprint of the fullstate simulator. Thus, this forms an exponential decrease to the memory footprint that would be needed by the fullstate simulator.

\subsection{Alternative Interpretation of the Clifford-Fullstate Hybrid Simulation}

It should be clear that there is no loss of generality for CFH simulation over the fullstate simulation. Moreover, we can understand how CFHS relaxes the constraints of Clifford simulation by considering a computational basis expansion of Eq.~\eqref{eq:cfhsstate}:
\begin{align}
(F, \Phi) \simeq  U\ket{\phi} = \sum_{k} \phi_k (U\ket{k}) \, .   
\end{align}

The set $\{U\ket{k}\}$ represents a basis of stabilizer states defined by stabilizer groups connected to each other by flipping the signs of the stabilizer basis elements -- as encoded in the Pauli frame -- corresponding to the 1-bit positions of $k$. For Clifford simulation -- and equivalently CFH simulation of only Clifford gates -- the simulator represents the state $U\ket{0}$, i.e., we can only represent a single state of that basis. CFH simulation is the natural extension of Clifford simulation to allow an arbitrary superposition over a stabilizer state basis, whose coefficients are stored in the fullstate vector $\Phi$. By this same analysis, the CFH simulation is also an extension of fullstate simulation. Fullstate simulation fixes the basis and only updates the coefficients, whereas CFHS is a method for judiciously choosing whether to update the coefficients or update the basis. In this case, we have a highly efficient way of storing and updating the basis for a limited set of unitary transformations, and an efficient method for determining the action of any other gate relative to this dynamic basis. One could imagine other cases outside of gate-based quantum computing where this hybrid simulation method could be useful, analogous to the interaction picture of quantum mechanics.

\subsection{Extraction of Expectation Values}
As discussed in Section~\ref{sec:fullstatesim}, we can extract Pauli operator expectation values, and the same holds for CFH simulation. To extract the expectation value of the Pauli operator $P\in \mathcal{P}_n$, one only has to map $P$ under the action of the Pauli frame represented by $U$, defined by
\begin{align}
F(P) = U^\dagger P U = \text{phase}(P)\prod_{i<n} (-i)^{\lambda(P, X_i) + \lambda(P, Z_i)}\eff Z_i^{\lambda(P, X_i)} \eff X_i^{\lambda(P, Z_i)} \, . \label{eq:Fmap}
\end{align}
The expectation value of the Pauli~$P$ is then given by
\begin{align}
\braket{P}_{(F, \Phi)} = \bra{\psi} P \ket{\psi} = \bra{\phi} F(P)\ket{\phi} = \braket{F(P)}_\Phi \, . \label{eq:chfsexp}
\end{align} 
This is sufficient to estimate the cost functions of variational algorithms, for example, VQE and QAOA.
When one is interested in extracting computational basis coefficients and probabilities, then the computation is less straightforward. The use of Eq.~\eqref{eq:chfsexp} to these ends is prohibitively expensive as it would require a sum over $2^n$ expectation values. Instead, the best method would be to apply the Clifford unitary to the fullstate, i.e., 
\begin{align}
(F, \Phi) \to (F_{origin}, \Psi) \, , \nonumber
\end{align}
and then extract the coefficients from $\Psi$. A Clifford unitary $U \in \mathcal{C}_n$ can be implemented using at most~$\mc O(n)$ multi-qubit rotations of the form $ e^{-i\pi/4 P}$ with Paulis $P \in \mathcal{P}_n$, which is formally proven in Appendix~\ref{sec:appendix_constralg}. Thus, any Pauli frame can be implemented/inverted using at most $\mc O(n)$ multi-qubit rotations. A constructive algorithm for doing this is provided in the appendix~\ref{sec:appendix_constralg}\footnote{ Note there will always be an ambiguity in the overall phase of the state. One must keep this in mind when considering the raw basis-state coefficients.}.

%% file: sec_bench.tex
\section{Applications and Benchmarking}
\label{sec:benchmark}
In this section, we present the analysis of the performance of our novel emulation approach, implemented as a backend in the Intel QSDK. We evaluate our CFH-simulator against the Intel Quantum Simulator (IQS) version~1.1.1, and additionally compare the results obtained with and without enabling parallelization through message-passing interface (MPI) protocols. By investigating a variety of Hamiltonian simulation workloads, including those stemming from quantum chemistry and random Hamiltonians with varying locality for different numbers of qubits, we ensure a comprehensive evaluation across a coherent and broad parameter range. The findings are summarized below, and further evaluation results can be found in Appendix~\ref{sec:appendix_eval}.

\subsection{Preliminaries of Hamiltonian Simulation}
For a given time-independent Hamiltonian $H$, based on Schrödinger's equation, the unitary $U(t) = e^{-iHt}$ defines the temporal evolution in time $t$ (note that we set $\hbar =1$). The Hamiltonian $H$ can be written as a sum of local Pauli terms $H = \sum_j h_j$, as they form an orthogonal basis. Each $h_j$ is proportional to a product of Paulis $\sigma_{j_k}$ with $j_k \in \{0,1,2,3\}$ up to a real factor $c_j \in \mathbb{R}$:
\begin{equation}
    h_j = c_j \, \bigotimes_k \, \{ \sigma_{j_k} \} \, .
    \label{eqn:klocal}
\end{equation}
In general, the individual local terms in the sum do not commute, i.e., $[h_i, h_j] \neq 0$ for $i\neq j$. We can break down the temporal evolution described by $U(t)$ in smaller temporal steps via the first-order Trotter-Suzuki~\cite{trotter1959product, suzuki1976generalized} decomposition, given by 
\begin{align}
   U(t) = e^{-iHt} = \text{exp} \left( -it\sum_j h_j \right) \approx  \left( \prod_j e^{-i h_j \frac{t}{M}} \right)^{M} \, , \label{eqn:trotter}
\end{align}
with $M \in \mathbb{Z}_{>0}$ denoting the number of Trotter steps. Each of the Trotter steps in the Trotter-Suzuki decomposition is typically implemented via a CNOT staircase~\cite{trotter1959product, whitfield2011simulation}, as depicted in the exemplary quantum circuit for implementing the unitary $e^{- i \frac{\theta}{2} \left(\sigma_x \otimes \sigma_y \otimes \sigma_z \otimes \sigma_x \right) }$ in Figure~\ref{fig:cnot_ladder}. Note that this implementation requires, in general, $2(k-1)$ CNOT gates where $k$ corresponds to the \textit{locality} of the Hamiltonian, which is formally introduced in the following:
\begin{figure}
    \centering
    \includegraphics[width=0.9\linewidth]{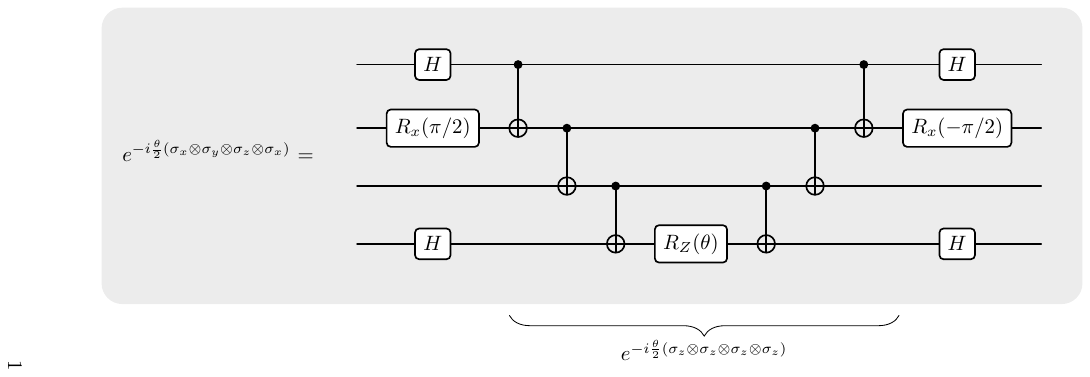}
    \caption{The unitary $e^{-i \frac{\theta}{2} P}$ generated by the Pauli $P=\sigma_x \otimes \sigma_y  \otimes \sigma_z  \otimes \sigma_x$ acting on 4 qubits can be decomposed in the presented CNOT staircase~\cite{trotter1959product, whitfield2011simulation} for implementation with one and two qubit gates.}\label{fig:cnot_ladder}
\end{figure}
\begin{definition}[\cite{kitaev2002classical, bluhm2024hamiltonian}] 
\label{def:locality}
A Hamiltonian $H \in \mathcal{H}^{\otimes n}$ acting on $n$ qubits has a \textit{locality} of $k$ or is named $k$-local, if $ H = \sum_i c_i P_i $ with $P_i \in \{ \mathbb{I}, X, Y, Z\}^{\otimes n}$ has $c_i= 0$ for all $|P_i| > k$, where,~$|P_i|$ denotes the Pauli weight of the respective Pauli string $P_i$. The Pauli weight is given by the count of non-trivially acting Paulis in the tensor product of $P_i$, respectively.
\end{definition}
Note that the respective decision problem of whether a Hamiltonian is of locality~$k$ is formalized as Hamiltonian Property Testing~\cite{bluhm2024hamiltonian}. The Pauli weight is a good representative of the resource cost for implementing the Pauli term in the era of both NISQ~\cite{preskill2018quantum} and fault-tolerant quantum algorithms~\cite{chiew2024defining, steudtner2018fermion}. 
Thus, it can be beneficial to reduce the Pauli weight of the Hamiltonian, and thereby, the choice of \textit{mapping} from first-quantization to a qubit-based format is relevant~\cite{chiew2024defining, sawaya2020resource}. Here, we focus on the Jordan-Wigner~\cite{jordan1928paulische} representation as given in~\cite{sawaya2024hamlib} for consistency, but we note that an alternative mapping could decrease the Pauli weight and thus further reduce resource costs.

\subsection{Evaluation on Random Hamiltonians}
To unveil the performance improvement in the simulation time of CFHS, we create a set of random Hamiltonians with distinct parameter configurations in our benchmarking setting. Hence, we evaluate CFHS and IQS on random Hamiltonians with either 50 or 100 terms and a number of qubits from 8 to 24. We increase the locality up to the maximal number of qubits in steps of~2 up to the respective qubit number of each Hamiltonian. The parameters are summarized in the Table~\ref{tab:params_rh}, and the specific set of parameters is chosen in order to have a feasible runtime for evaluation.
\noindent
To construct the random Hamiltonians used in our study, we create a number of $n_{\mathrm{terms}}$ unique non-trivially acting $k$-local Paulis and draw the real coefficients $c_j$ as in Eq.~\eqref{eqn:klocal} from the uniform distribution, which are then normalized. Thus, we can choose each unique term randomly from the set with the following number of elements:
\begin{align}
    \binom{n_{\mathrm{qubits}}}{k} \cdot 3^k \, .
\end{align}
As the randomly sampled Paulis are hermitian and the coefficients $c_j$ are real values, we get a hermitian Hamiltonian as in Eq.~\eqref{eqn:klocal} where we sum over the given number of terms $n_{\mathrm{terms}}$, corresponding to the given desired parameter configuration. \\
\begin{table}[h!]
\centering
\begin{tabular}{@{} l c c c r @{}}
\rowcolor{lightgray}
\textbf{Parameter} & \textbf{Acronym} & \textbf{Min. value} & \textbf{Max. value}  \\
No. qubits & $n_{\mathrm{qubits}}$  & 8   & 24 \\
\rowcolor{lightgray!40}
Locality  & $L$  & 4   & $n_{\mathrm{qubits}}$ \\
No. Terms  & $n_{\mathrm{terms}}$ & 50   & 100 \\
\bottomrule
\end{tabular}
\caption{Range of parameters used to generate the specific random Hamiltonians based on which we select a number of 280 distinct parameter configurations for generating each random Hamiltonian.}\label{tab:params_rh}
\end{table}

\subsubsection*{Compilation time} %
In addition to the simulation time, we also track the compilation time to demonstrate that the improvement in simulation time is not due to offloading the workload towards compilation. Our evaluation shows that the ratio of compilation time CFHS compared to IQS without MPI is on average $0.95$ with a standard deviation of $0.04$, whereas the ratio of compilation time CFHS compared to IQS with MPI is on average $0.98$ with a standard deviation of $0.03$. The detailed analysis can be found in Appendix~\ref{sec:appendix_comptime_eval}.

\subsubsection*{Simulation time} %

\begin{figure*}[h!]
    \centering
    \includegraphics[width=0.9\textwidth]{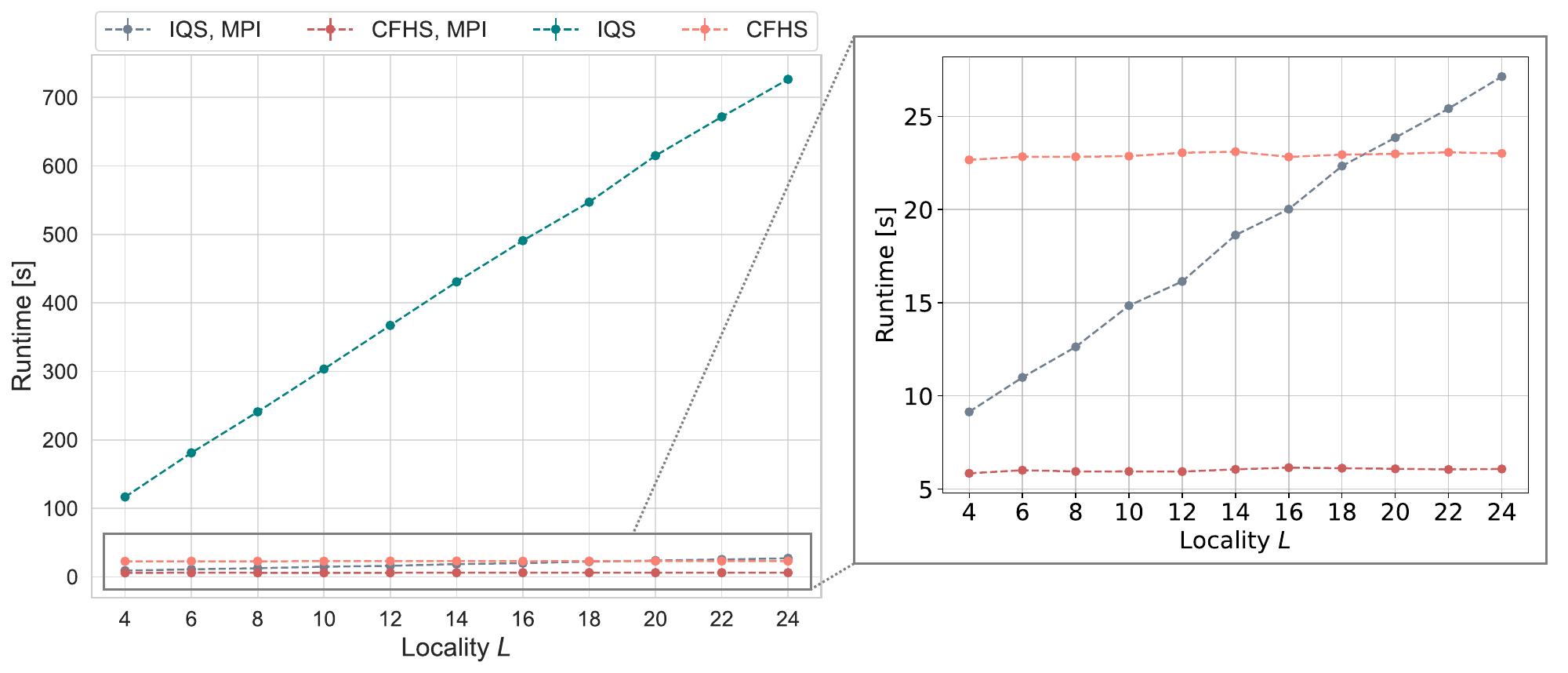}
    \caption{Comparison of the simulation time (runtime [s]) for random Hamiltonians with varying locality~$L$, 100 terms and 24 qubits, comparing results based on CFHS and IQS with and without MPI in this exemplary case. We observe a significant reduction in simulation time with our new CFHS method in both cases, with and without MPI. The linear increase of IQS simulation time in terms of the locality $L$, in contrast to the constant scaling of CFHS in locality~$L$, is nicely visible for results obtained with and without MPI. }\label{fig:sim_comparison24q_100}
\end{figure*}
We evaluate the simulation time (runtime) for the random Hamiltonian with a parameter range described in Table~\ref{tab:params_rh}. In order to visualize the improvement in scaling of CFHS over IQS, we showcase the runtime in dependence on the locality of the random Hamiltonians with 24 qubits and a locality ranging from 4 to 24. In Figure~\ref{fig:sim_comparison24q_100}, it is well visible that the simulation time of IQS increases linearly in the locality $L$, while the simulation time based on CFHS is constant in the locality $L$. In the exemplary case of a 24-qubit random Hamiltonian with 100 terms and a locality of 24, we obtain an improvement of runtime based on CFHS of $31.6$ compared to IQS without using MPI. 
The same scaling behavior of CFHS and IQS with locality $L$ is visible when enabling MPI, where we observe an improvement by a factor of $3.81 \pm 0.05$ for CFHS with MPI and $22.1 \pm 4.2$ for IQS with MPI in the case of 24 qubit Hamiltonians compared to the runtime without utilizing MPI. The overall speedup in runtime, in dependence on the number of qubits per random Hamiltonian, is depicted in Figure~\ref{fig:runtime_case24q_ratio_MPI}, where we take the mean and standard deviation over Hamiltonians with equal qubit number but different localities. Evaluating the ratio of runtime without MPI and the runtime with MPI, we see a greater improvement when utilizing MPI for the original IQS version, exhibiting a gap towards CFHS. This gap seems to become larger with increasing qubit number. Nevertheless, this reduced speedup by MPI on CFHS is balanced with the vast runtime improvement due to the locality-independent emulation approach. For the exemplary case of a 24-qubit random Hamiltonian with 100 terms and a locality of 24, we observe a runtime improvement of CFHS of a factor of $4.5$ compared to IQS when enabling MPI. In addition to the results presented here, we showcase the rescaled runtime results for the complete set of random Hamiltonians in Appendix~\ref{sec:appendix_rh_simtime_eval} with and without MPI.

\begin{figure*}[h!]
        \centering
        \includegraphics[width=0.45\textwidth]{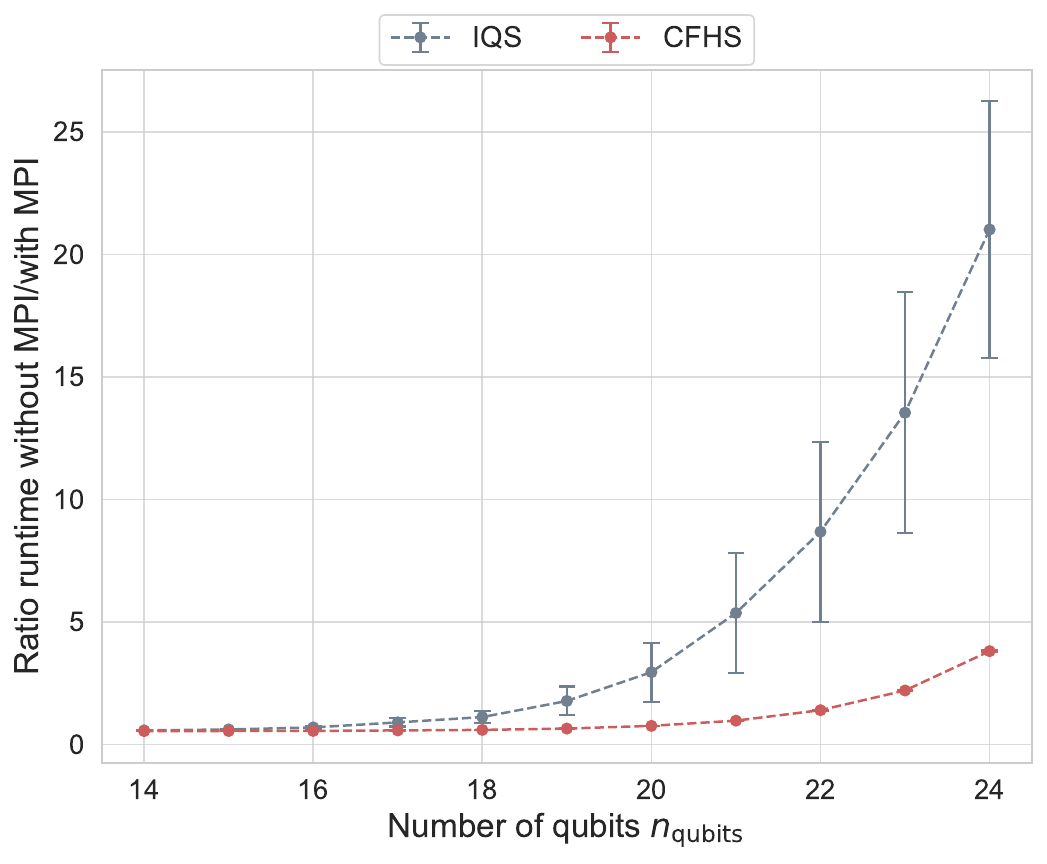}
        \caption{Demonstrating the impact when enabling MPI for IQS and CFHS using 128 MPI ranks. We present the ratio of the runtimes for random Hamiltonians with 100 terms each, with the presented mean and standard deviation of Hamiltonians for different localities $L$ and specific numbers of qubits $n_{\text{qubits}}$.}\label{fig:runtime_case24q_ratio_MPI}
\end{figure*}

\subsection{Evaluation on Chemistry Hamiltonians }
To demonstrate the performance improvement in CFHS simulation time for real-world applications in chemistry, we select molecular Hamiltonians that contain 14 to 24 qubits using the Jordan-Wigner representation for their qubit-based format from the Hamlib dataset~\cite{sawaya2024hamlib}, a dataset designed to make benchmarking studies more standardized and reproducible. For consistent evaluation, we set the number of Trotter steps to one, i.e., $M=1$ in Eq.~\eqref{eqn:trotter}. Increasing the number of Trotter steps $M$ would result in a linear growth in compilation and running time. The evaluated chemistry Hamiltonians are chosen in order to cover a broad range of parameter configurations, including the number of qubits, number of terms, and the one norm, which indicates the complexity of the system. In general, there are larger variations in the parameter configurations of the examined chemistry Hamiltonians, in contrast to the well-defined random Hamiltonians from before.
\subsubsection*{Compilation time} %
Similarly, for the random Hamiltonians, we track the compilation time, demonstrating that we are not offloading runtime towards compilation time for the evaluations on chemistry Hamiltonians. Our evaluation shows that the ratio of compilation time CFHS compared to IQS without MPI is $0.96 \pm 0.03$ and the ratio of compilation time CFHS compared to IQS with MPI is given by $0.98 \pm 0.08$. A detailed evaluation is added in Appendix~\ref{appendix:comptime_hamlib}, and exact compilation timing results can be found in Table~\ref{tab:scores} for completeness.
\subsubsection*{Simulation time} %
As with the random Hamiltonians, we are especially interested in revealing the scaling of simulation time with locality for the set of chemistry Hamiltonians, and we confirmed that there was no significant increase in compilation time from IQS to CFHS. Each of the considered chemistry Hamiltonians contains terms with a range of different localities, in contrast to the well-defined locality of the artificially generated random Hamiltonians. Thus, we consider the dependence of runtime on the mean locality $L_{\text{mean}}$ in Figure~\ref{fig:simtime_hamlib}. We compare the simulation time (without rescaling, thus corresponding to the wall-clock duration of the simulation) with the semi-logarithmic plot in Figure~\ref{fig:simtime_hamlib} to demonstrate how CFHS decreases the simulation time by a constant factor (see Figure~\ref{fig:hamlib_simtime_a}). This improvement in simulation time in CFHS remains evident when using MPI (see Figure~\ref{fig:hamlib_simtime_b}). When evaluating the runtime ratio comparing MPI usage or no MPI usage, we observe in Figure~\ref{fig:hamlib_ratio_runtime_improvement_MPI} that the improvement of MPI usage considering IQS and CFHS is more similar for both. Note that the Hamiltonians considered here are less similar and well-distributed in their parameter ranges than the random Hamiltonians considered before. Considering 24-qubit Hamiltonians as an exemplary case, for $O_3$, we obtain an improvement in runtime of a factor of ~$14.4$ for CFHS compared to IQS without MPI and a factor of $ 17.8$ with MPI. All timing results for our chemistry Hamiltonians can be found in Table~\ref{tab:scores}, together with their number of terms and details on their locality.
Furthermore, we present the rescaled simulation time as defined in Eq.~\eqref{eqn:workload} in Appendix~\ref{appendix:runtime_hamlib}. We observe a slight increase in the runtime of the IQS results with $L_{\text{mean}}$, both with and without MPI. The rescaled runtime of CFHS is approximately constant when varying the mean locality $L_{mean}$ for the chemistry Hamiltonians under consideration (see Figure~\ref{fig:workload_hamlib}) for Hamiltonians that contain more than 8 qubits. For Hamiltonians with a lower qubit number, other processes dominate the simulation time, and the performance improvement is less obvious.

\begin{figure*}[h!]
    \centering
    \begin{subfigure}[t]{0.5\textwidth}
        \centering
        \includegraphics[width=\textwidth]{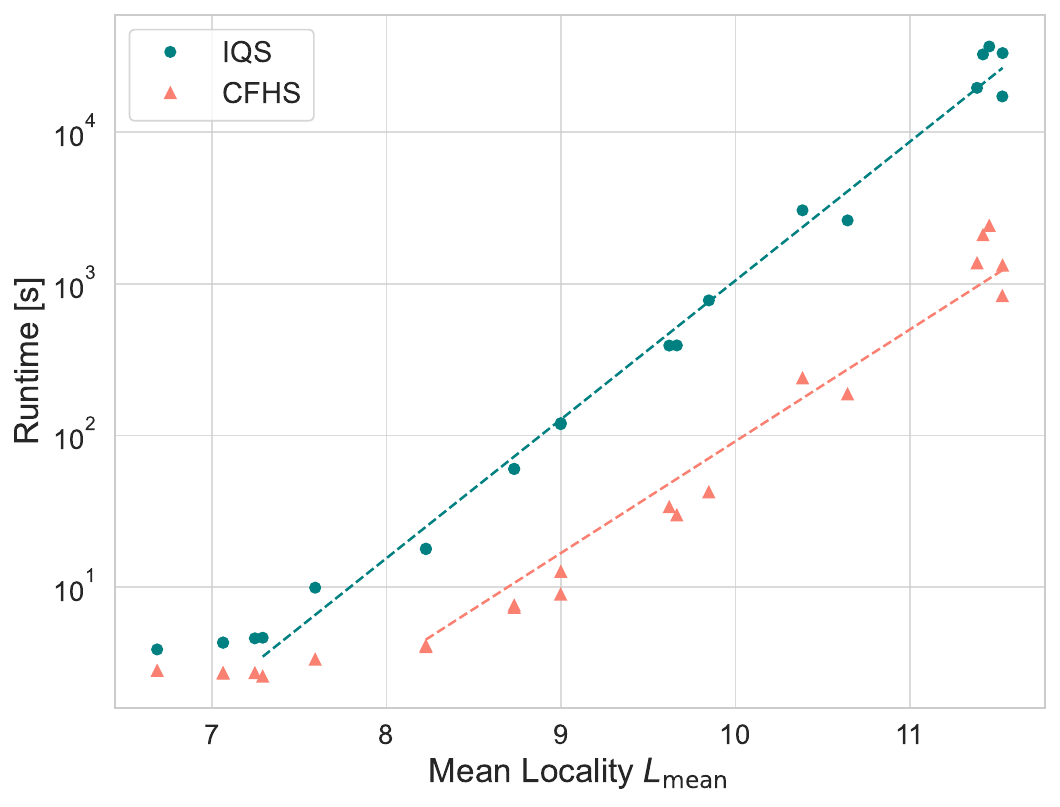}
        \caption{Simulation time [s] for selected chemistry Hamiltonians without MPI parallelization.}\label{fig:hamlib_simtime_a}
    \end{subfigure}%
    ~
    \begin{subfigure}[t]{0.5\textwidth}
        \centering
        \includegraphics[width=\textwidth]{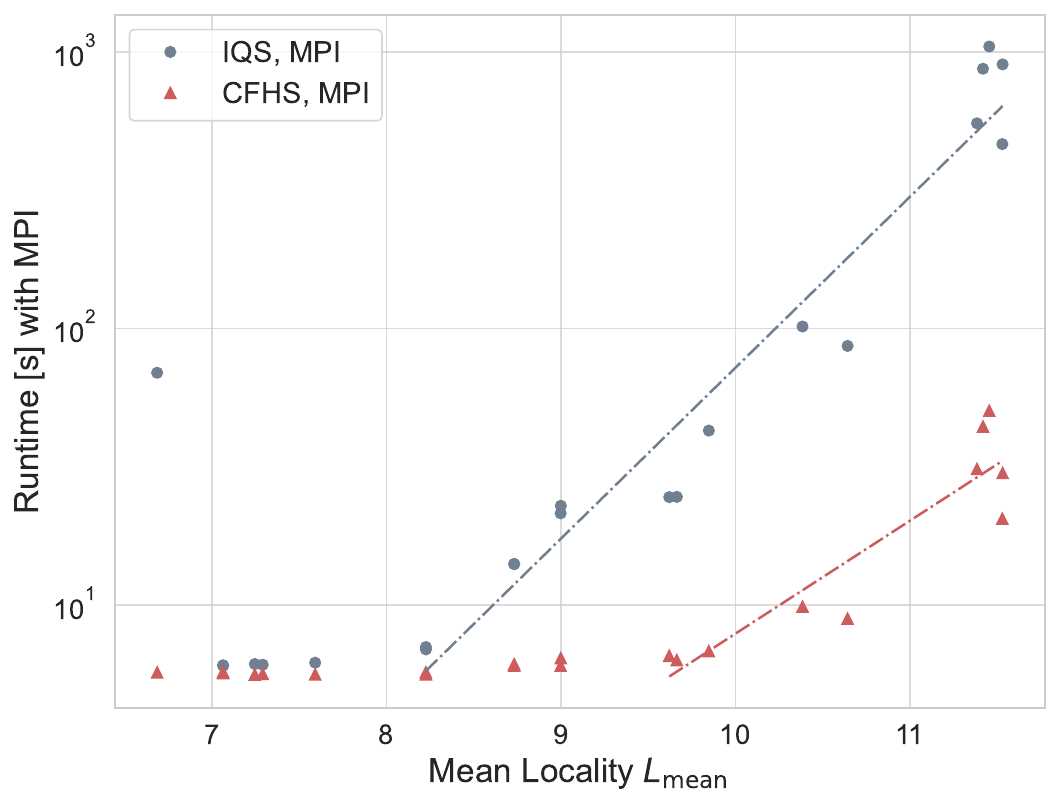}
        \caption{Simulation time [s] for selected chemistry Hamiltonians with MPI parallelization.}\label{fig:hamlib_simtime_b}
    \end{subfigure}
    \caption{Simulation time (runtime [s]) for the depicted chemistry Hamiltonians as in Table~\ref{tab:scores} ordered by the mean locality $L_{mean}$ of each Hamiltonian in Figure~\ref{fig:hamlib_simtime_a}. The results when enabling MPI parallelization are depicted in Figure~\ref{fig:hamlib_simtime_b}. Note that this is a semi-logarithmic plot, logarithmic in the y-axis and an approximate exponential scaling in the mean locality is indicated, the asymptotic scaling behavior sets in earlier for the runs without MPI as for the ones with  MPI.}\label{fig:simtime_hamlib}
\end{figure*}

\begin{figure}
    \centering
    \includegraphics[width=0.5\linewidth]{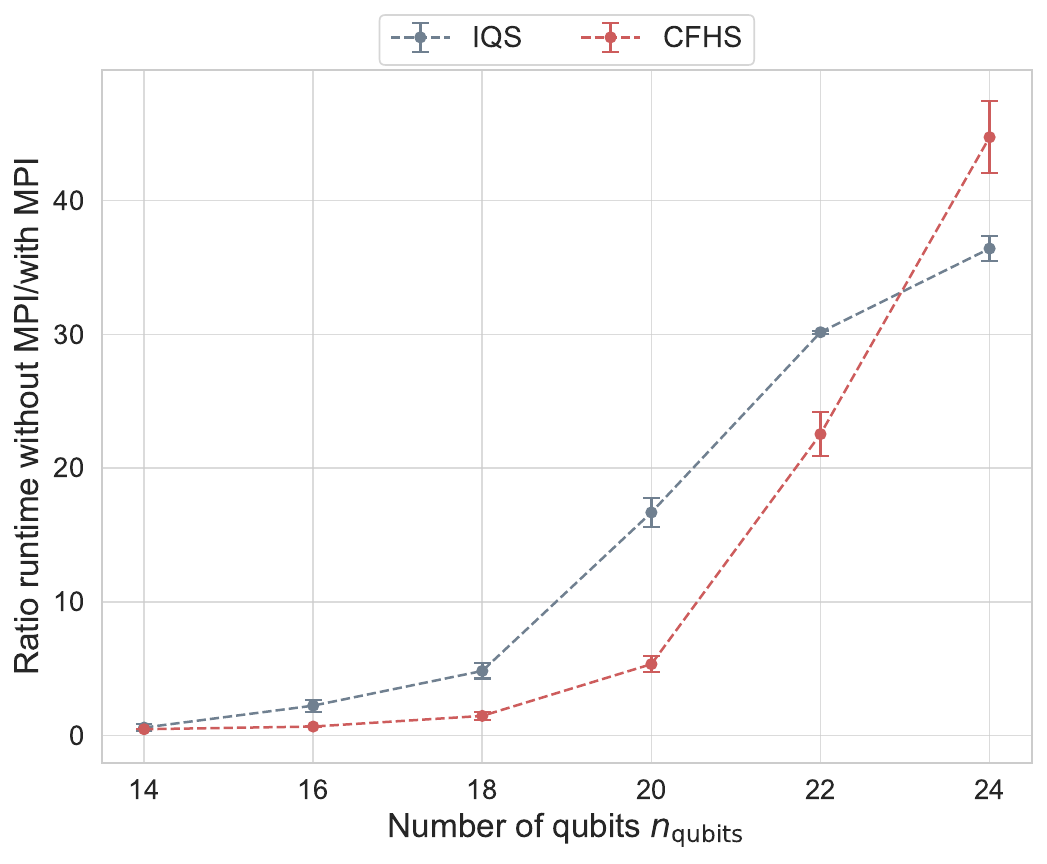}
    \caption{Impact when utilizing MPI for IQS and CFHS using 128 MPI ranks as before. Here, we show the results presenting the ratio of runtime of using no MPI and MPI for selected chemistry Hamiltonians from Hamlib as given in Table~\ref{tab:scores}. We show the mean and standard deviation taken over the Hamiltonians with the same number of qubits $n_{\text{qubits}}$ but different numbers of terms and localities, respectively.}
    \label{fig:hamlib_ratio_runtime_improvement_MPI}
\end{figure}

%% file: sec_conclusion.tex
In this work, we present a novel hybrid simulation approach for efficiently emulating multi-qubit rotations. We achieve this by tracking the state of the system with both a full state vector and a Pauli frame, with Clifford operations updating only the Pauli frame, which has a negligibly small cost. This is particularly impactful for Trotterized Hamiltonian simulation, considered to be a high-cost application of quantum computation in the chemistry or material science domain. Specifically, Trotterized Hamiltonian workloads lead to many high locality terms. With our Clifford fullstate hybrid simulator, multi-qubit rotations become as resource-intensive as one-qubit rotations by leveraging the Pauli frame as a lookup table. This has a significant impact on both the time required for quantum simulation, as only a fraction of operations impacts the state vector, and the communication overhead in distributed settings, as the Pauli frame is compact enough to be shared cheaply among nodes, in contrast to portions of the state vectors. 
To demonstrate this performance improvement, we evaluate different workloads consisting of artificially generated random Hamiltonians with a specified number of terms and locality for a given number of qubits involved. When evaluating the simulation time in terms of locality, we observe a sizeable improvement in CFHS over the original IQS variant, as the simulation time for CFHS is independent of the locality of the Hamiltonian terms. This is in contrast to the original IQS implementation, where the runtime depends linearly on the Hamiltonian locality. For our evaluated chemistry Hamiltonians with 24 qubits, a speedup of a factor of $18 \pm 4$ is achieved of CFHS over IQS without MPI, and a speedup of a factor $22 \pm 4$ with MPI. Another way to look at this speedup is by considering the ratio of non-Clifford \emph{vs} Clifford gates: Only the former ones need to be applied by the CFHS on the state vector (by far the most expensive update), thus reducing the overall simulation time accordingly. This performance improvement is clearly demonstrated by artificial random Hamiltonians, each of which has only terms with a specified locality, and verified by chemistry Hamiltonians that contain terms with a range of localities. For the chemistry Hamiltonians, the performance improvement from IQS to CFHS is slightly reduced when MPI is enabled, but becomes more pronounced as the maximum locality increases. Considering the compilation time, there is none to no significant overhead comparing CFHS to the original IQS with and without MPI.
Thus, we demonstrated the impact of improving simulation time by benchmarking a set of various workloads using our novel Clifford Fullstate Hybrid simulator, which comes with no difference in compilation time but a significant improvement in simulation time, by removing the linear scaling in terms of locality that is provided using IQS. Thus, our new hybrid simulation approach presents a vast performance improvement in simulation time, especially for Hamiltonians with high locality. This can be of high importance not only for chemistry simulation, but also in high-energy physics, for example, considering the Schwinger model. Our new hybrid approach could be beneficial to other types of simulators, and in the future, one could investigate a sparse representation of the fullstate vector in order to completely remove the overhead for simulating Clifford circuits.

%% file: sec_appendix.tex
\section{Efficiently implementing Clifford unitaries with multi-qubit rotations}\label{sec:appendix_constralg}
Any Pauli frame $F$ or its corresponding Clifford unitary $U$ can be implemented or inverted using at most~$\mc O(n)$ multi-qubit rotations. In order to demonstrate this, as before, we define our Pauli Frame~$F$ as:
\begin{align}
F = \begin{pmatrix}
\eff Z_0 & \eff X_0 \\
\vdots & \vdots \\
\eff Z_{n-1} & \eff X_{n-1}
\end{pmatrix} \, .
\end{align}
 This represents the action of the respective Clifford unitary $U$ on the Paulis $Z_j$ and $X_j$ with $0 \leq j \leq (n-1)$ and without considering signs, resulting in an $n \times 2$ matrix. Note that we do not consider the phase. 
\begin{theorem}
    A Clifford unitary $U \in \mathcal{C}_n$ can be implemented using at most~$\mc O(n)$ multi-qubit rotations of the form $R(\theta) = e^{-i\theta/2 P}$ with Paulis $P \in \mathcal{P}_n$. 
\end{theorem}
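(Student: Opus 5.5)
The plan is to pass from Clifford unitaries to symplectic matrices over $\mathbb{F}_2$ and use symplectic transvections. Write a Pauli operator modulo phase as a vector $v=(x|z)\in\mathbb{F}_2^{2n}$. The anticommutation flag $\lambda$ is then the symplectic form $\langle v,w\rangle$. A Clifford unitary modulo phase and Pauli corrections is the same data as the Pauli frame $F$ without signs, i.e.\ a $2n\times 2n$ symplectic matrix whose columns are $\eff Z_j,\eff X_j$.

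The first step is the key identity. For $R_P=e^{-i\frac{\pi}{4}P}$ and any Pauli $Q$,
\begin{align*}
R_P\,Q\,R_P^\dagger = \begin{cases} Q, & \lambda(P,Q)=0,\\ \pm i\,PQ, & \lambda(P,Q)=1. \end{cases}
\end{align*}
This follows from $R_P=\tfrac{1}{\sqrt2}(\mathbb{I}-iP)$ and a two-line computation. In binary form, conjugation by $R_P$ is the symplectic transvection $T_h(v)=v+\langle v,h\rangle h$ with $h$ the vector of $P$. So it suffices to show two things: every symplectic matrix is a product of $\mathcal{O}(n)$ transvections, and the sign information in the frame can be fixed with $\mathcal{O}(1)$ extra rotations. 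For the signs, a Pauli $P$ equals $R_P(\pi)$ up to global phase. Moreover, any sign pattern on the $2n$ frame entries is produced by conjugating with a single Pauli, since the $\lambda$-pattern of a Pauli against a symplectic basis can be chosen arbitrarily by Eq.~\eqref{eq:expand}. Hence one extra rotation corrects all signs.

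For the symplectic part, I will give a constructive, column-by-column reduction. This is essentially the König--Smolin argument and mirrors inverting the frame as in the main text. Process the qubits $j=0,\dots,n-1$ in order. At stage $j$, the current frame has rows $k<j$ already equal to $(Z_k,X_k)$, and I want to map the current $\eff Z_j$ to $Z_j$ and then $\eff X_j$ to $X_j$. I use two lemmas.
\begin{itemize}
\item[(i)] For nonzero $u\ne w$, if $\langle u,w\rangle=1$ then $T_{u+w}$ maps $u\mapsto w$. Otherwise, pick an auxiliary $h$ with $\langle u,h\rangle=\langle w,h\rangle=1$; then $T_{w+h}T_{u+h}$ maps $u\mapsto w$. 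So at most two transvections are needed.
\item[(ii)] Once $\eff Z_j=Z_j$, one maps $\eff X_j\mapsto X_j$ with at most two transvections whose vectors $h$ satisfy $\langle h,Z_j\rangle=0$. These leave $Z_j$ fixed.
\end{itemize}
This gives at most four rotations per qubit, i.e.\ at most $4n+1$ rotations in total, which is $\mathcal{O}(n)$. Reading the sequence in reverse gives $U$ itself rather than $U^{-1}$, since $R_P(\tfrac{\pi}{2})^{-1}=R_P(-\tfrac{\pi}{2})$.

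The main obstacle is to show that the transvections at stage $j$ do not destroy the rows $k<j$ already fixed, and that the auxiliary $h$ in lemma (i) always exists within the allowed subspace. My approach uses that symplecticity is preserved throughout. Because of this, the current $\eff Z_j$ and $\eff X_j$ lie in the symplectic complement $W_j$ of $\mathrm{span}\{Z_k,X_k\}_{k<j}$, which is exactly the Paulis supported on qubits $\ge j$. Taking all transvection vectors $h$ inside $W_j$ makes each $T_h$ act trivially on the fixed rows, because $\langle Z_k,h\rangle=\langle X_k,h\rangle=0$.

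It remains to check that a suitable $h$ exists inside $W_j$, and for (ii) inside $W_j\cap Z_j^{\perp}$. This is a small counting argument. The non-degenerate form on $W_j\cong\mathbb{F}_2^{2(n-j)}$ makes the conditions $\langle u,h\rangle=\langle w,h\rangle=1$ two affine constraints, which are jointly satisfiable when $u\neq w$ are both nonzero. For (ii), both $\eff X_j$ and $X_j$ anticommute with $Z_j$, so $\eff X_j+X_j\in Z_j^\perp$ and the same construction works in the non-degenerate quotient.
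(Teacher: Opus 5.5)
Your route is the paper's: identify $e^{-i\frac{\pi}{4}P}$ with a symplectic transvection, pass to $\mathrm{Sp}(2n,\mathbb{F}_2)$, and bound the number of transvections linearly in $n$. The paper cites K\"onig--Smolin for the $4n$ bound and treats $U\in\mathcal{P}_n$ only as a separate case. You instead prove the bound inline and add an explicit one-rotation Pauli correction for the signs, which makes the argument more complete. However, one step of your inline proof fails as written: the existence claim behind lemma (ii).

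In (ii) you have $u=\eff X_j$ and $w=X_j$ with $\langle u,Z_j\rangle=\langle w,Z_j\rangle=1$. In the two-transvection case, also $\langle u,w\rangle=0$ and $u\neq w$.

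\textbf{Why the step fails.} The transvection vectors are $u+h$ and $w+h$. Fixing $Z_j$ therefore requires $\langle u+h,Z_j\rangle=\langle w+h,Z_j\rangle=0$, i.e.\ $\langle h,Z_j\rangle=1$. You instead place the auxiliary $h$ in $W_j\cap Z_j^{\perp}$. Then each transvection moves $Z_j$, and a direct computation gives $T_{w+h}T_{u+h}(Z_j)=Z_j+u+w\neq Z_j$, which undoes stage (i). The appeal to ``the non-degenerate quotient'' does not rescue this: $u$ and $w$ are not in $Z_j^{\perp}$, so they have no image in $Z_j^{\perp}/\langle Z_j\rangle$.

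\textbf{The repair.} Choose $h\in W_j$ with $\langle u,h\rangle=\langle w,h\rangle=\langle Z_j,h\rangle=1$. This system is solvable because $u,w,Z_j$ are linearly independent:
\begin{itemize}
\item $Z_j\neq u$ and $Z_j\neq w$, since $Z_j$ commutes with itself but both $u$ and $w$ anticommute with $Z_j$;
\item $Z_j=u+w$ would force $\langle u,w\rangle=\langle u,Z_j\rangle=1$, which is the one-transvection case.
\end{itemize}
Non-degeneracy of the form on $W_j$ then gives such an $h$. Now $u+h$ and $w+h$ lie in $W_j\cap Z_j^{\perp}$, so they fix both the earlier rows and $Z_j$. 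The map $u\mapsto w$ goes through exactly as in (i), and your count of $4n+1$ rotations stands.

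\textbf{An alternative.} The paper's constructive algorithm avoids the two-transvection case altogether. At each step it lets the target be a single-qubit Pauli on the relevant qubit, chosen to anticommute with the current entry, so one transvection always suffices. It then fixes the result with single-qubit Cliffords and SWAPs at the end.
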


\begin{proof}
   In order to get the sequence of rotations \textit{implementing} the Clifford unitary, we reverse the rotation sequence and negate each rotation (i.e., $e^{-i\theta/2 P} \rightarrow e^{+i\theta/2 P}$). The idea is now to map the Pauli Frame $F$ representing $U$ back to representing the initial Pauli Frame~$F_{origin}$:
    \begin{align}
    \begin{pmatrix}
\eff Z_0 & \eff X_0 \\
\vdots & \vdots \\
\eff Z_{n-1} & \eff X_{n-1}
\end{pmatrix} \, 
\mapsto 
\begin{pmatrix}
 Z_0 &  X_0 \\
\vdots & \vdots \\
 Z_{n-1} &  X_{n-1}
\end{pmatrix} =: F_{origin} \, ,
    \end{align} 
    while implementing $U^\dagger$ in the form of a sequence of Pauli rotations. If it holds that $U \in \mathcal{P}_n$, it is straightforward to implement a rotation as $U^\dagger=P \in \mathcal{P}_n$, then $e^{-i\pi/2 P}$ forms the corresponding multi-qubit rotation. If we now consider any $U^\dagger$ in the group $\mathcal{C}_n \not \, \,\,  \mathcal{P}_n$. As it holds that 
    \begin{align}
    \mathcal{C}_n \not \, \,\,  \mathcal{P}_n \cong  \text{Sp}(2n, \mathbb{F}_2) \equiv \text{Sp}(2n) \, ,
    \end{align}
    i.e., there is a bijective map to the symplectic group on $\mathbb{F}_2^{2n}$, which is generated by symplectic transvections~\cite{koenig2014efficiently, o1978symplectic}. Every symplectic transvection can be written of the form  $\frac{1}{\sqrt{2}}(\mathbb{I} - i P) = e^{-i\pi/4 P}$ for some Pauli $P$~\cite{pllaha2020weyl}. As given in~\cite{koenig2014efficiently}, any element in the group $\text{Sp}(2n)$ can be decomposed into at most $4n$ symplectic transvections. The multi-qubit Pauli rotation $e^{-i\pi/4 P}$ is a symplectic transvection and thus, any element in $\text{Sp}(2n)$ can be inverted using $\mc O(n)$ multi-qubit rotations. For the implementation of $U$ using $\mc O(n)$ multi-qubit rotations, we use the approach of reversing the rotation sequence as described in the beginning, which leads to no overhead in the number of rotations applied. 
\end{proof}

For completeness, we also present the constructive algorithm for implementing or inverting a Pauli frame using $\mathcal{O}(n)$ Pauli rotations. Specifically, we consider the inversion of the Pauli Frame through a sequence of Pauli rotations and, thus, we again express the following transformation in the form of $n$ Pauli rotations:
   \begin{align}
    \begin{pmatrix}
\eff Z_0 & \eff X_0 \\
\vdots & \vdots \\
\eff Z_{n-1} & \eff X_{n-1}
\end{pmatrix} \, 
\mapsto 
\begin{pmatrix}
 Z_0 &  X_0 \\
\vdots & \vdots \\
 Z_{n-1} &  X_{n-1}
\end{pmatrix} \, .
\end{align} 
As noted before, the reversal is achieved through implementation, and in this case, the sequence of Pauli rotations is only reversed, with each rotation negated. For the following constructive algorithm, the following observation on the commutativity and anti-commutativity of Pauli rotations is relevant. We first define a Clifford Pauli rotation. 
\begin{definition}[Clifford Pauli rotation]
    A Clifford Pauli rotation is a rotation $e^{-i\theta Q}$ with $ Q \in \mathcal{P}_n$ such that:
    \begin{align}
        e^{-i\theta Q} \,  P \, e^{+i\theta Q} \in \mathcal{P}_n \, \forall P \in \mathcal{P}_n \, .
    \end{align}
    As for $P, Q$ we either have $[P, Q]=0$ or $\{P, Q\}=0$ , in the former case we have $e^{-i\theta Q} \,  P \, e^{+i\theta Q} =P$ and $e^{-i\theta Q} \,  P \, e^{+i\theta Q}= iPQ \sin(2\theta) + P \cos(2\theta)$ in the anti-commuting case. As $PQ\in \mathcal{P}_n$ or $P\in \mathcal{P}_n$, we must have that $\theta =\frac{\pi}{4}n $ with $n\in \mathbb{Z}$.
\end{definition}
\begin{note}
    Consider a Pauli $P \in \mathcal{P}_n$ and a Clifford Pauli rotation $R(Q) =e^{-i \frac{\pi}{4}Q}$ generated by a Pauli $Q \in \mathcal{P}_n$. If one conjugates the Pauli $P$ with the rotation $R(Q)$, we get:
    \begin{align}
        R(Q) \, P \, (R(Q))^{\dagger} = \begin{cases}
            P \, , \, &\text{ if } [P,Q]=0 \, , \\
            R (Q)^2 P = -iQ \, P \, , \, &\text{ if } \{P,Q \}=0 \, .
        \end{cases}
    \end{align}
\end{note}
As before, we aim to construct a unitary $U$ of Pauli rotations that has the following effect:
\begin{align}
    U P_i U^{\dagger} = Z_i \, \, \forall \, i \, .
\end{align}
We construct the unitary $U$ from $n$ Pauli rotations of the form $R(Q) =e^{-i \frac{\pi}{4}Q}$.
\newpage 
\subsubsection*{Constructive Algorithm: Compute the Clifford Pauli rotations for Pauli frame inversion}
Given a Pauli frame $F$ on $N$ qubits $\{q_j\}_{0\leq j<N}$:

\begin{enumerate}

\item Let $F_0 \gets F$.

\item For each $0 \leq j < N$ in order for any ordering of the qubits, $\{q_j\}_{0\leq j<N}$:

\begin{enumerate}

\item Find any index $i$ of $F_j $ such that $(F_j)_i = (\eff Z_i, \eff X_i)$ and $\eff Z_i$ has support on $q_j$.\footnote{Such an $i$ always exists for any $q$ as we can show by expanding the following in any frame $F = \{(\eff Z_i, \eff X_i)\}_i$:

\begin{align}
1= \lambda(Z_q, X_q) = \sum_i \left( \lambda(Z_q, \eff Z_i) \lambda(\eff X_i, X_q) \oplus \lambda(Z_q, \eff X_i) \lambda(\eff Z_i, X_q) \right).
\end{align}
For this to be true, at least one term $i$ of the sum must be non-zero, in which case $\lambda(Z_q, \eff Z_i)=1$ or $\lambda(\eff Z_i, X_q)=1$ . }

\item The Pauli frame element $\eff Z_i$ has support on $q_j$ of either $X$, $Y$ or $Z$. Let this support be $\sigma$ and $\tilde \sigma$ be any other single-qubit Pauli operator support such that $\lambda(\sigma_{q_j}, \tilde \sigma_{q_j}) = 1$ and thus  $\lambda(\eff Z_i, \tilde \sigma_{q_j}) = 1$. Then let 

\begin{align}
F'_{j +1} \gets \exp\left(-i\frac{\pi}{4} Q\right ) F_j \exp\left(i\frac{\pi}{4} Q\right ),
\end{align}
 
where $Q = i \tilde \sigma_{q_j} * \eff Z_i$. Note this implies $(F'_{j+1})_i = (\tilde \sigma_{q_j}, \eff X_i')$ for some Pauli operator $\eff X_i'$.

\item Let $ \doubletilde \sigma$ be the Pauli operator support which is not $\tilde \sigma$ and not that of $\eff X_i'$'s support on $q_j$.\footnote{Note again, such support must exist by the frame properties of $F'_{j+1}$ as $\eff X_i'$ must anti-commute with the single qubit $\tilde \sigma_{q_j}$. For example, let $\tilde \sigma = Z$ and the support of $\eff X$ on $q_j$ be $X$. Then $\doubletilde \sigma = Y$.  } Then let

\begin{align}
F_{j +1} \gets \exp\left(-i\frac{\pi}{4} Q'\right ) F'_{j + 1} \exp\left(i\frac{\pi}{4} Q'\right ),
\end{align}

where $Q' = i \doubletilde \sigma_{q_j} * \eff X_i'$. Note that $\lambda(Q', \eff X_i') = \lambda(\doubletilde \sigma_{q_j}, \eff X_i') = 1$ and $\lambda(Q', \tilde \sigma_{q_j}) = \lambda(\doubletilde \sigma_{q_j}, \tilde \sigma_{q_j}) \oplus \lambda(\eff X_i', \tilde \sigma_{q_j}) = 1 \oplus 1= 0$. This implies that $(F_{j+1})_i = (\tilde \sigma_{q_j}, \doubletilde \sigma_{q_j})$, and we have reduced the position $i$ of Pauli frame $F_{j+ 1}$ to single qubit support on $q_j$. By the frame properties, no other positions of  $F_{j+ 1}$ have support on $q_j$, and thus the position~$i$ will be unaffected by later transformations.

\end{enumerate}

\item Reduce $F_N$ by non-entangling transformations using IQS's native gate operations to the origin frame $F_{origin}$. We easily compute the non-entangling transformations (single-qubit operations and applications of SWAP gates) to transform back to the original frame:
    \begin{enumerate}
        \item For each position $i$, map the symplectic pair to its original form $(Z_q, X_q)_i$. In total, there are only 24 possible single-qubit Clifford gates that map the symplectic pair at position $i$ for qubit $q$ back to the original form: $(\sigma_q, \sigma'_q)_i \rightarrow (Z_q, X_q)_i$ consisting of at most three single-qubit Pauli rotations.
        \item Map each symplectic pair corresponding to qubit $q$ to its correct position $i$ using SWAP gates which is a native operation in IQS done via index relabeling\footnote{A SWAP gate can be also expressed via multi-qubit rotations: a SWAP for qubits $0$ and $1$ can by implemented by $\pi /4$ rotations about the axis $X_0X_1$, $Y_0 Y_1$ and $Z_0 Z_1$.}.
    \end{enumerate}
    After this step, we have recovered the frame in its original form $F_{origin}$. 
\end{enumerate}

\newpage 
\section{Amplitude update rule for multi-qubit rotations}
\label{sec:appendix_update_rule_multiqubit_rotation}

In the notation of section~\ref{sec:fullstatesim}, we discuss the implementation of rotations generated by the $n$-qubit Pauli $P$ on an arbitrary state $\ket{\psi} = \sum_k \alpha_k \ket{k}$. Since the operation is linear, we can express the update rule by using the transformation of the computational basis $\ket{k}$:

\begin{align}
    R_P(\theta) \ket{\psi} =& \sum_k \alpha_k R_P(\theta) \ket{k} \\
    =& \sum_k \alpha_k \left[ \cos{(\theta/2)} \ket{k} -i \sin{(\theta/2)}
    e^{i \varphi_P(k)} \ket{f_P(k)} \right]\, .
\end{align}
Now there are two situations:
When $P$ is a product of only $I$ and $Z$ single-qubit Paulis, then $f_P(k)=k$, and each probability amplitude $\alpha_k$ is independently updated with the phase:

\begin{equation}
\left[ \cos(\theta/2) - i \sin(\theta/2) e^{i \varphi_P(k)}\right] 
= \begin{cases}
        e^{-i \theta/2 } \, , & \text{if } \mod(|m_Z(P)) \, \odot \, k|, 2) = 1 \\
        e^{ i \theta/2 } \, , & \text{otherwise} \, . 
    \end{cases} 
\end{equation}
where we have used the fact that in this scenario $e^{i \varphi_P(k)} = (-1)^{|m_Z(P)) \, \odot \, k|}$.

In the general case of $P$ including $X$ or $Y$ factors, one still has $f_P(f_P(k))=k$ and the amplitudes can be updated in pairs according to:

\begin{equation}
    \begin{pmatrix}
        \alpha_k^\prime \\
        \alpha_{f_P(k)}^\prime
    \end{pmatrix}
=
    \begin{pmatrix}
    \cos(\theta/2) & -i \sin(\theta/2) e^{i \varphi_P(f_P(k))} \\
    -i \sin(\theta/2) e^{i \varphi_P(k)} & \cos(\theta/2)
    \end{pmatrix} 
    \begin{pmatrix}
        \alpha_k \\
        \alpha_{f_P(k)}
    \end{pmatrix}
\end{equation}
where the primed amplitudes $\alpha_k^\prime$ and $\alpha_{f_P(k)}^\prime$ indicate their value after the update.

\newpage

\section{Complete workflow of the hybrid Clifford fullstate simulation}
In Figure~\ref{fig:gateupdateflow}, we describe the workflow of the hybrid fullstate simulator in detail. In the case that $G$ is Clifford, the frame $F$ is updated, and otherwise the fullstate $\Phi$. Thereby, we not only distinguish whether $G$ is a Clifford gate or not, but further distinguish whether $G$ is a state preparation, rotation, or measurement in the case $G$ is not a Clifford gate. If $G$ forms a non-Clifford (multi-qubit) rotation, the rotation axis is updated based on the up-to-date Pauli Frame. 
\begin{figure}[h!]
    \centering
    \scalebox{0.75}{
    \input{figures/CFHS_gate_flow}
    }
    \caption{Workflow of the gate-updates in the Clifford fullstate hybrid Simulator. The flow diagram describes a simulator update by a gate $G$. First, one needs to check whether $G$ is a Clifford gate or not. In case the gate $G$ is Clifford, it is used to update the Pauli Frame $F$. Otherwise, if $G$ is non-Clifford, the gate $G$ is used to update the fullstate $\Phi$. Therefore, the process differs depending on whether $G$ represents a state preparation, rotation, or measurement. Altogether, the Pauli Frame $F$ is returned along with the fullstate $\Phi$. }
    \label{fig:gateupdateflow}
\end{figure}
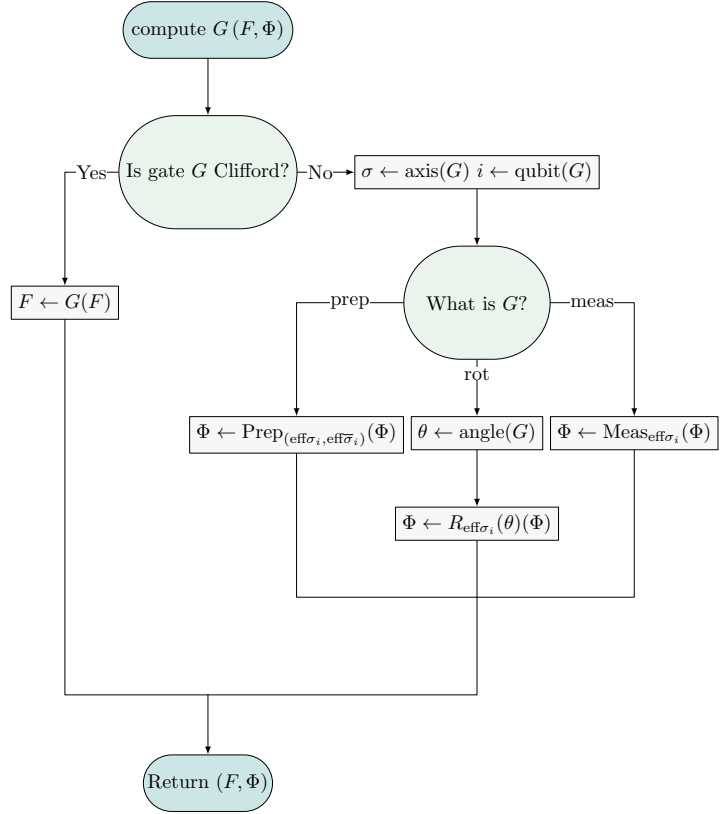
\newpage
\section{Used Hardware for Benchmarking and Details on Parallelization with Message Passing Interface}

We evaluated the compilation and simulation times for each Hamiltonian workload and compared the results obtained with a message-passing interface (MPI) for parallelization with those without it. The MPI runs used 4 nodes, 64 MPI tasks, allocating 4 CPUs (cores) per task. 
Absolute timings are, of course, dependent on the processor, memory, and simultaneous occupancy of the cluster. For the presented benchmarking evaluation, each node in the partition used has the following specifications: dual-socket Xeon Platinum 8592+, 2x64 cores Emerald Rapids, 1024GB DDR5, dual-rail OPA.

\section{Random Hamiltonian evaluation results}
\subsection{Evaluation of compilation time for random Hamiltonians}\label{sec:appendix_comptime_eval}
In Figure~\ref{fig:comptime_avg_loc_100terms}, we show the linear increase of the compilation time in terms of locality~$L$. Furthermore, we observe that CFHS does not result in a noticeable increase in compilation time overhead compared to the initial IQS version. With and without MPI usage, the compilation time of IQS is approximately the same as that of CFHS. Note that many compilation times are less than 10 seconds, and fluctuations in the ratio arise due to small differences and limited tracking accuracy. To highlight, the distribution displayed in Figure~\ref{fig:comptime_ratio_rh} is centered around one.
\begin{figure*}[h!]
    \centering
    \begin{subfigure}[t]{0.5\textwidth}
        \centering
        \includegraphics[width=\textwidth]{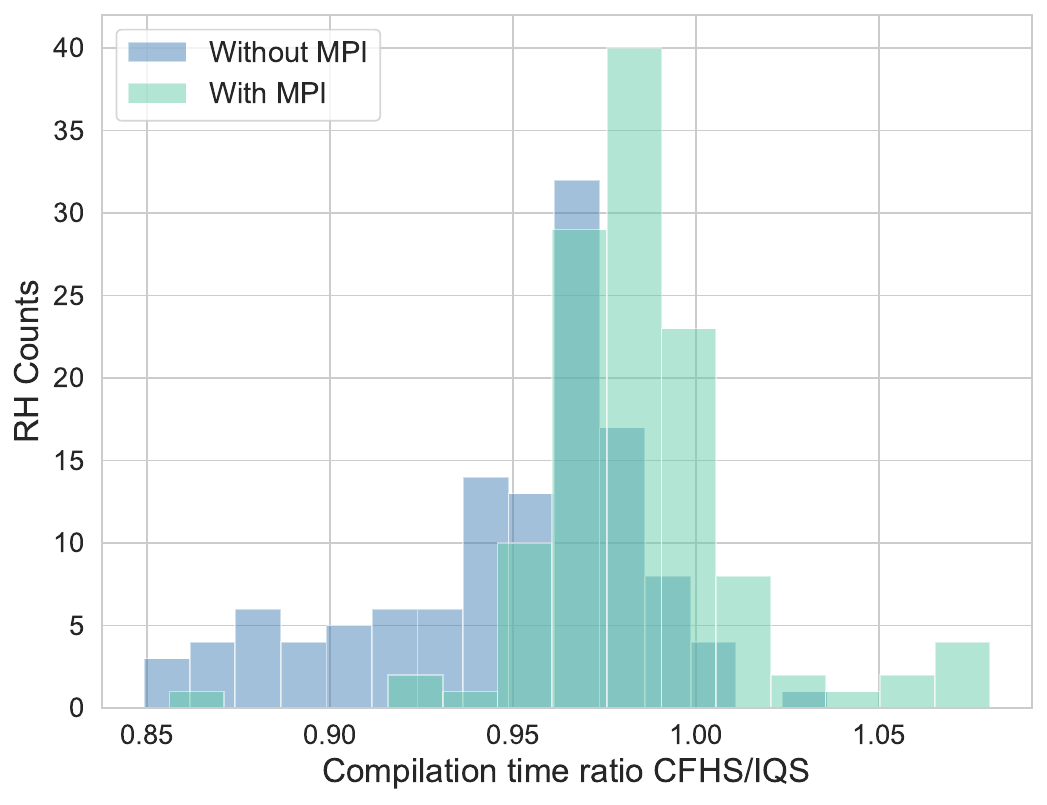}
        \caption{Histogram of the compilation time ratio distribution for CFHS/IQS.}\label{fig:comptime_ratio_rha}
    \end{subfigure}%
    ~
 \begin{subfigure}[t]{0.535\textwidth}
        \centering
        \includegraphics[width=\textwidth]{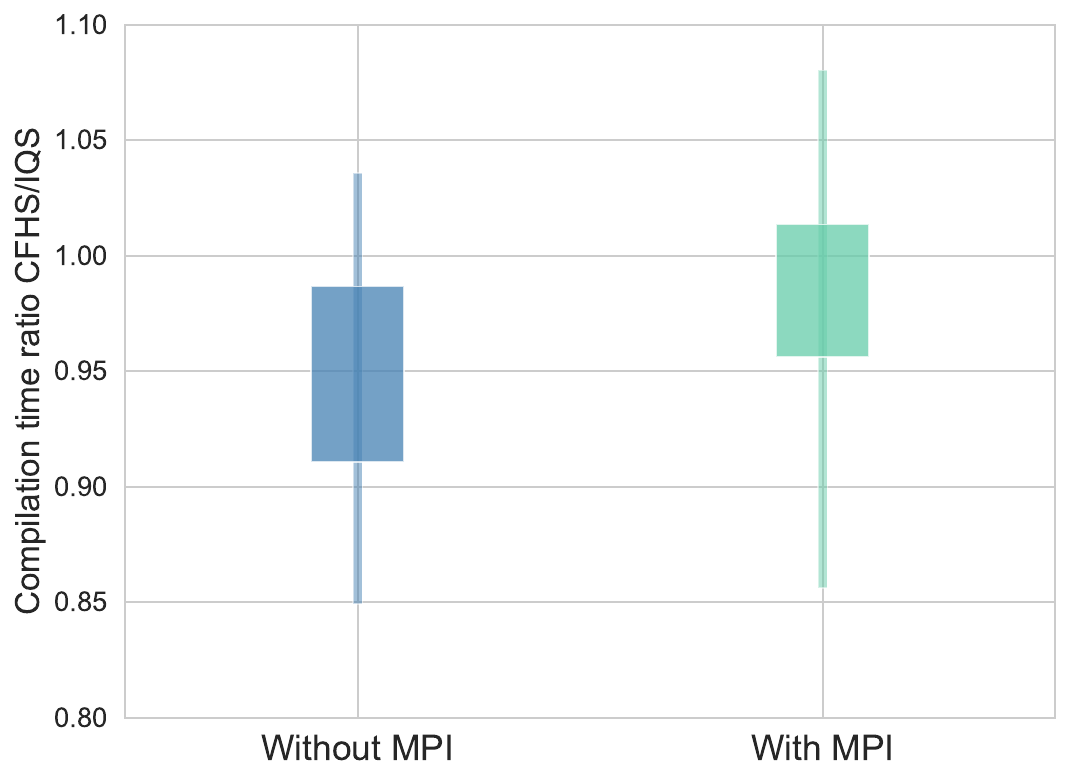}
        \caption{Mean and standard deviation of the compilation time ratio distribution for CFHS/IQS.}\label{fig:comptime_ratio_rhb}
    \end{subfigure}%
    \caption{Evaluation of the compilation time ratio distribution for CFHS/IQS that is measured in seconds for all random Hamiltonians (RH) with 100 terms each. In Figure~\ref{fig:comptime_ratio_rhb}, the wider bar denotes the mean of the ratio plus or minus one standard deviation. The thinner bar showcases the minimum and maximum value of the ratio. We compare the case without MPI to the case with MPI. }\label{fig:comptime_ratio_rh}
\end{figure*}

\begin{figure}[h!]
    \centering
    \includegraphics[width=0.55\textwidth]{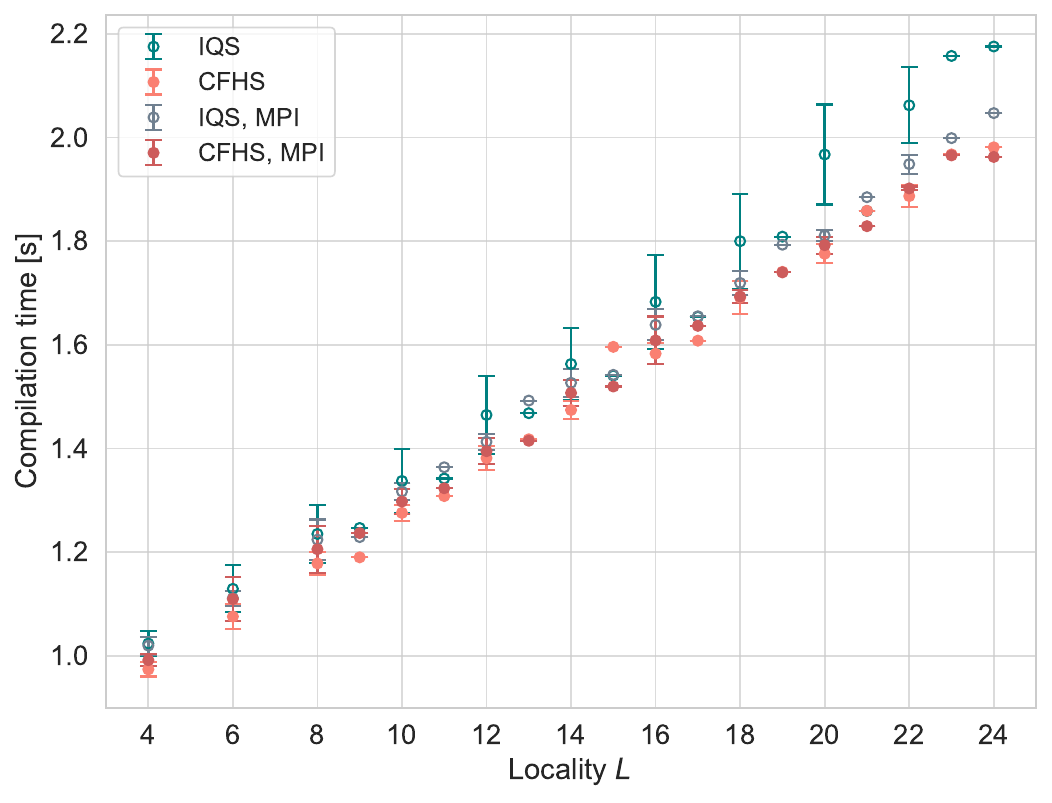}
    \caption{Increase of the compilation time [s] for random Hamiltonians each with 100 terms, averaged over the respective locality of the Hamiltonian for given numbers of qubits. We compare both the compilation time with and without MPI parallelization, as well as the compilation time results obtained via IQS and the CFHS simulator. We showcase the mean and standard deviation as error bars.}
    \label{fig:comptime_avg_loc_100terms}
\end{figure}
    \label{fig:comptime_avg_nq_100terms}

\subsection{Evaluation of simulation time for random Hamiltonians}\label{sec:appendix_rh_simtime_eval}
Accompanying the simulation time plots presented above and in order to better visualize the scaling of the simulation time being independent of the locality of the respective multi-qubit Pauli rotation, we define the \textit{rescaled simulation time} as follows:
\begin{align}
    \text{Rescaled Runtime [s]} := \frac{\text{Runtime [s]}}{n_{\mathrm{terms}}\cdot 2^{n_{\mathrm{qubits}}}} \, . \label{eqn:workload}
\end{align}
We note that the rescaled running time of CFHS is expected to be independent of the locality of the respective Hamiltonian. The numerical results for our evaluations of random Hamiltonians with distinct parameter configurations are presented in Figure~\ref{fig:sim_workload}. While Figure~\ref{fig:sim_workloada} represents the results without using MPI parallelization, we observe for Hamiltonians with a higher locality an increasing improvement between the results obtained with CFHS compared to the IQS results with the locality $L$. This difference is less obvious when using MPI parallelization in Figure~\ref{fig:sim_workload_mpi}. Note the difference in scale of a magnitude in the zoomed-in plots. In both Figures, we observe a regime marked in gray in the Figures~\ref{fig:sim_workloada} and~\ref{fig:sim_workload_mpi} due to rescaling as in Eq.~\eqref{eqn:workload}, where the asymptotic scaling is not yet reached for Hamiltonians with small qubit numbers $n_{\text{qubits}}$.

\begin{figure*}[h!]
    \centering
    \begin{subfigure}[t]{0.5\textwidth}
        \centering
        \includegraphics[width=\textwidth]{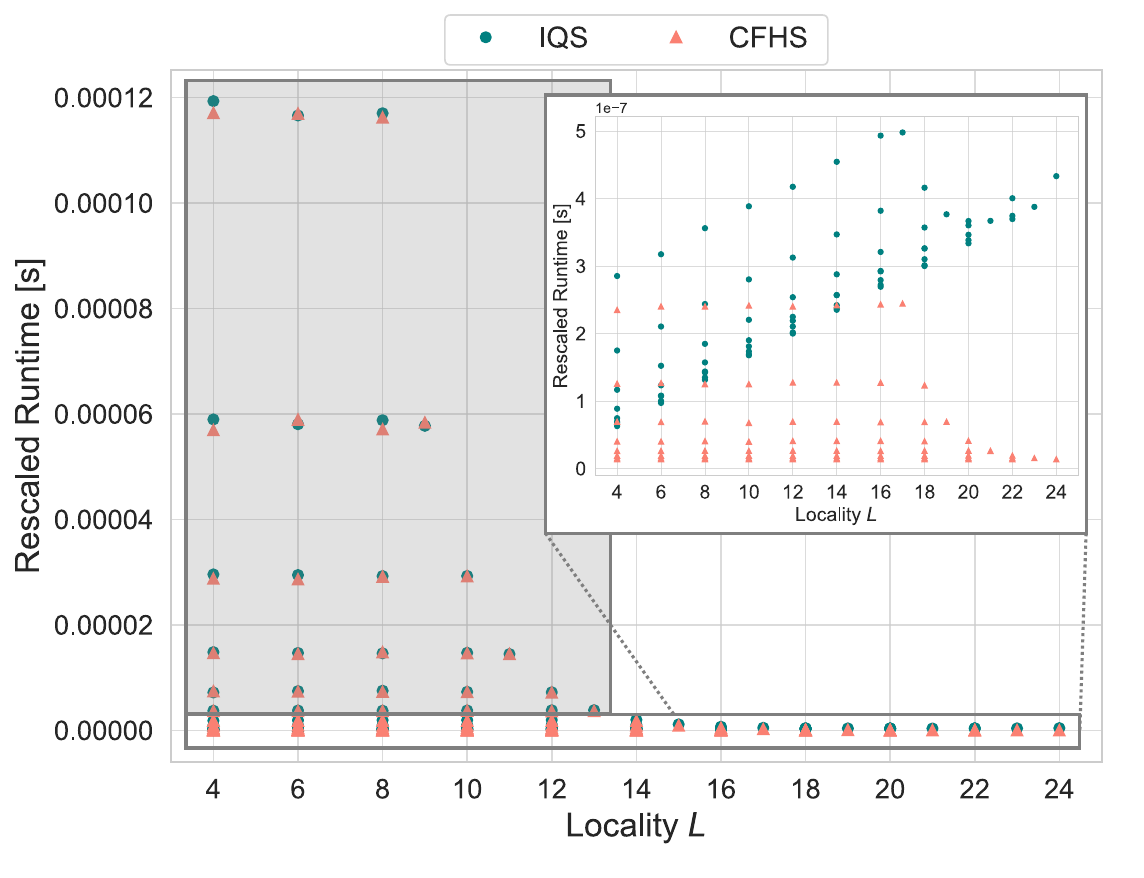}
        \caption{Rescaled simulation time [s] without using MPI for random Hamiltonians. For better visibility, we present the results for Hamiltonians with $n_{\mathrm{qubits}} > 16$ in detail.}\label{fig:sim_workloada}
    \end{subfigure}%
    ~
    \begin{subfigure}[t]{0.491\textwidth}
        \centering
        \includegraphics[width=\textwidth]{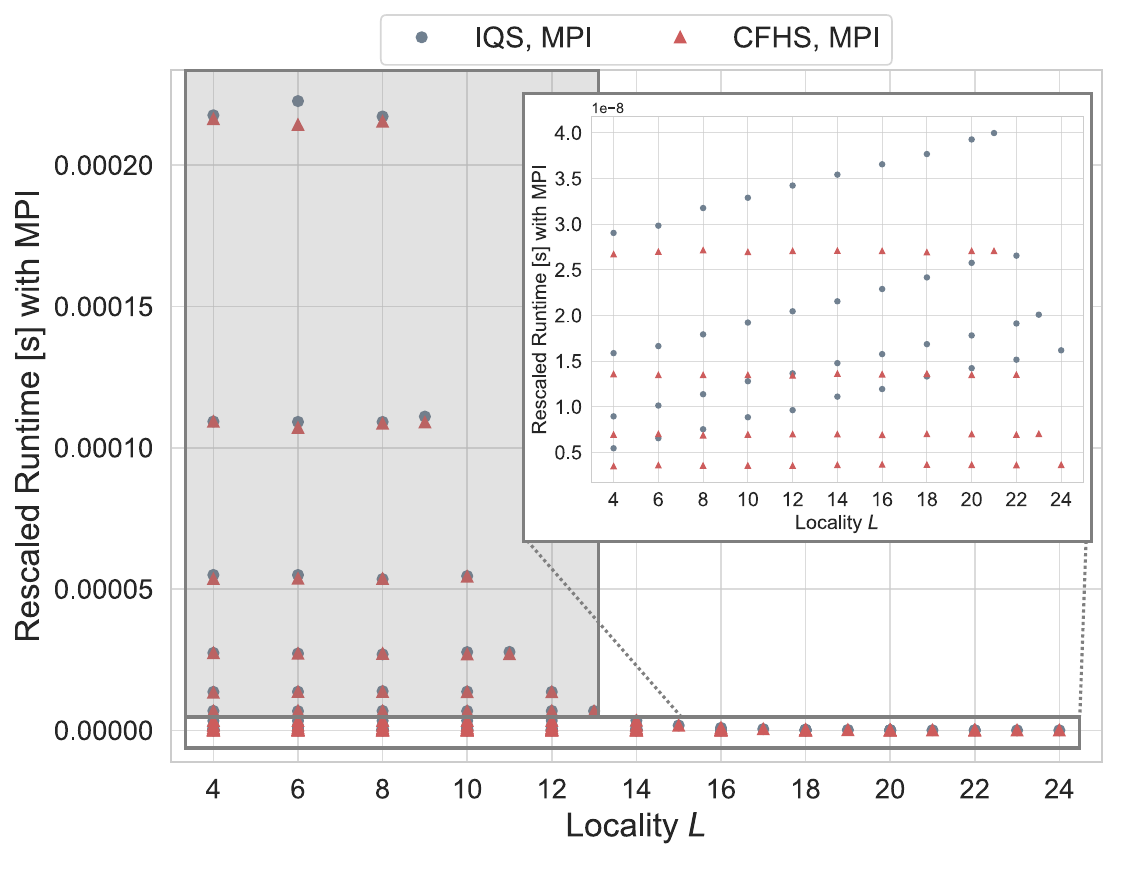}
        \caption{Rescaled simulation time [s] using MPI for random Hamiltonians. For better visibility, we present the results for Hamiltonians with $n_{\mathrm{qubits}} > 20$ in detail.}\label{fig:sim_workload_mpi}
    \end{subfigure}
    \caption{Rescaled runtime [s] as defined in Eq.~\eqref{eqn:workload} per random Hamiltonian with 100 terms and a defined locality $L$ ranging from 4 to 24, comparing results without (Figure~\ref{fig:sim_workloada}) and with MPI parallelization (Figure~\ref{fig:sim_workload_mpi}). We mark a regime in grey where the asymptotic behavior of the simulator is not yet achieved for small qubit numbers $n_{\text{qubits}}$. This is more apparent for low locality $L$ since this is the only place where small qubit numbers $n_{\text{qubits}}$ can contribute. %
    } \label{fig:sim_workload}
\end{figure*}

\section{Chemistry Hamiltonian evaluation results}
\subsection{Evaluation of compilation time for chemistry Hamiltonians}\label{appendix:comptime_hamlib}
\begin{figure*}[h!]
    \centering
    \begin{subfigure}[t]{0.5\textwidth}
        \centering
        \includegraphics[width=\textwidth]{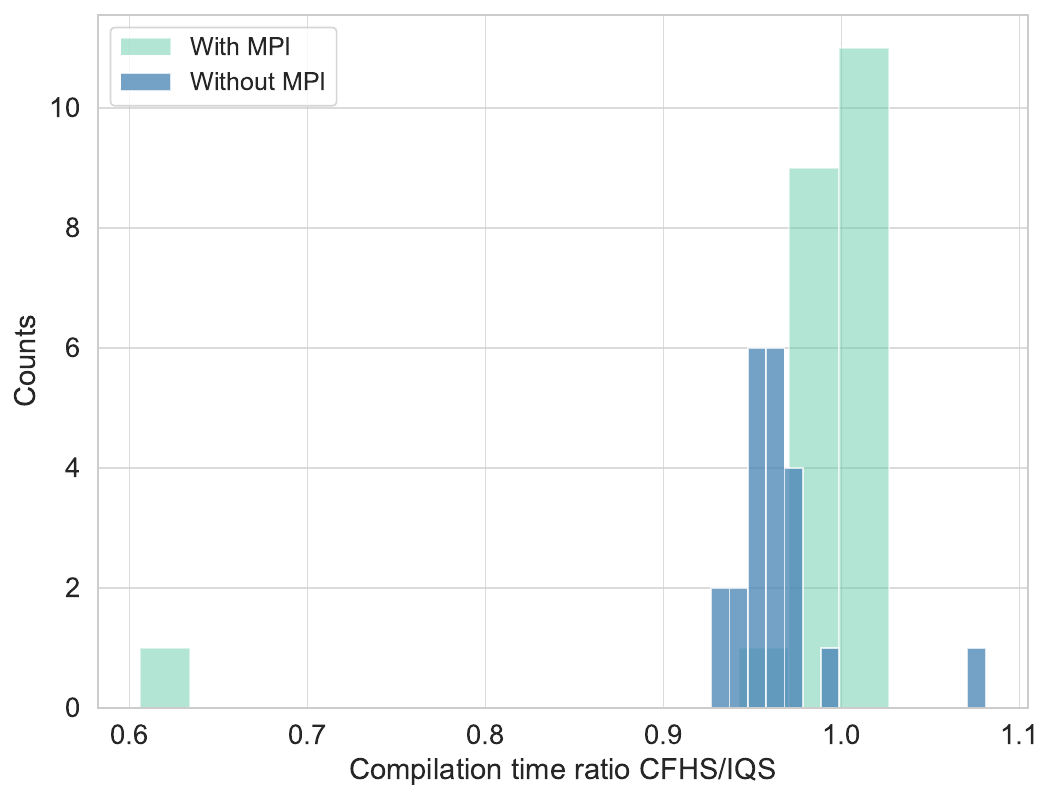}
        \caption{Histogram of the compilation time ratio distribution for CFHS/IQS.}\label{fig:ratio_comptime_hamliba}
    \end{subfigure}%
    ~
 \begin{subfigure}[t]{0.52\textwidth}
        \centering
        \includegraphics[width=\textwidth]{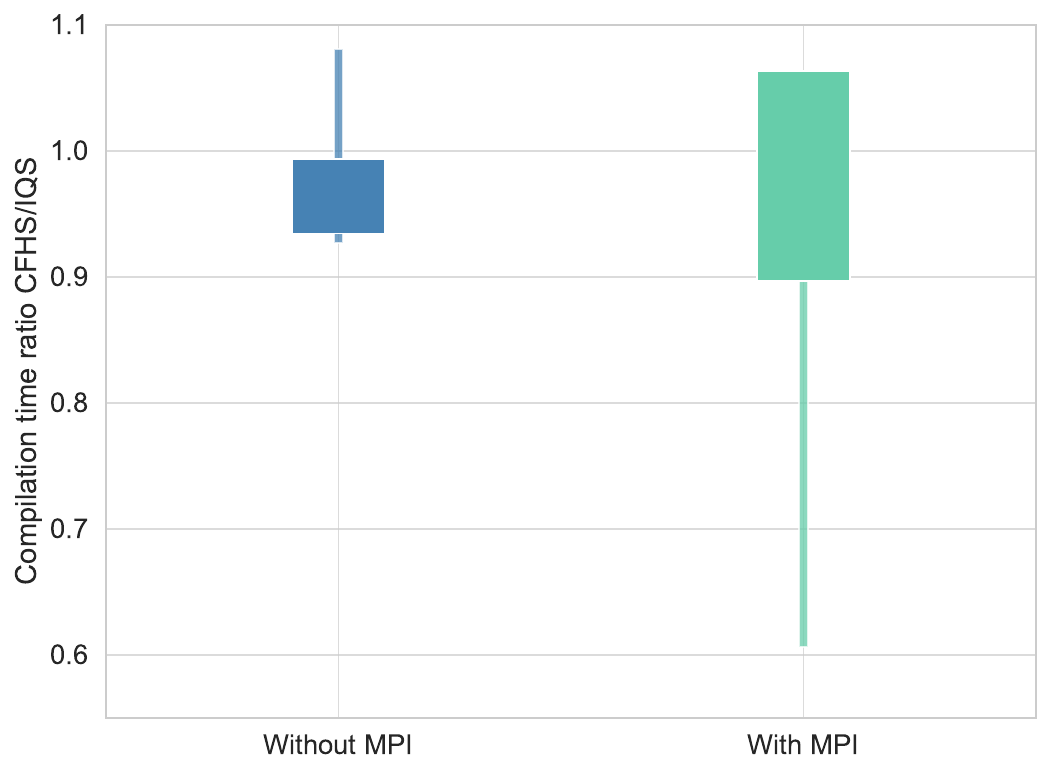}
        \caption{Mean and standard deviation of the compilation time ratio distribution for CFHS/IQS.}\label{fig:ratio_comptime_hamlibb}
    \end{subfigure}%
    \caption{Ratio of the compilation time [s] of CFHS/IQS for the evaluated chemistry Hamiltonians (CH) from Hamlib. We compare the compilation time results without MPI to the case with MPI. In Figure~\ref{fig:ratio_comptime_hamlibb}, the wider bar denotes the mean of the ratio plus or minus a standard deviation, and the thinner bar showcases the minimum and maximum value of the respective ratio value. Figure~\ref{fig:ratio_comptime_hamliba} visualizes the compilation time ratio of CFHS/IQS as a histogram. There is one outlier present corresponding to a Hamiltonian with the lowest number of terms and a compilation time below 5 seconds.}\label{fig:ratio_comptime_hamlib}
\end{figure*}
For the chemistry Hamiltonians, we also track the compilation time to demonstrate that no workload is offloaded from the simulation time. The compilation time results for the set of Hamiltonians as given in Table~\ref{tab:scores} are presented in Figure~\ref{fig:ratio_comptime_hamlib}. As expected, the distribution is again centered around one. There are a few outliers, corresponding to Hamiltonians with a low number of terms and qubits that have compilation times below 5 seconds. Exact timing results can be found in Table~\ref{tab:scores}. Small differences in compilation time ratios become less prominent for Hamiltonians corresponding to larger qubit numbers and localities.
\subsection{Evaluation of simulation time for chemistry Hamiltonians}\label{appendix:runtime_hamlib}
As for the random Hamiltonians, we present the \textit{rescaled runtime} as defined in Eq.~\eqref{eqn:workload}. As before, we highlight the regime where the asymptotic scaling has not yet set in, and zoom in on parts of the plot for better visibility. In general, the asymptotic regime sets in at a higher mean locality $L_{\text{mean}}$ for the MPI results. The gap of IQS and CFHS is nicely visible in Figure~\ref{fig:workload_hamliba}, where an approximately constant dependence of the CFHS rescaled runtime on the mean locality can be observed, whereas the IQS rescaled runtime increases slightly (approximately in a linear way) with the mean locality. For the MPI results displayed in Figure~\ref{fig:workload_hamlibb}, the improvement in the presented regime corresponds approximately to a constant factor for a mean locality between 10 and 11, but note the difference in magnitude comparing results with and without MPI. 

\begin{figure*}[h!]
    \centering
    \begin{subfigure}[t]{0.496\textwidth}
        \centering
        \includegraphics[width=\textwidth]{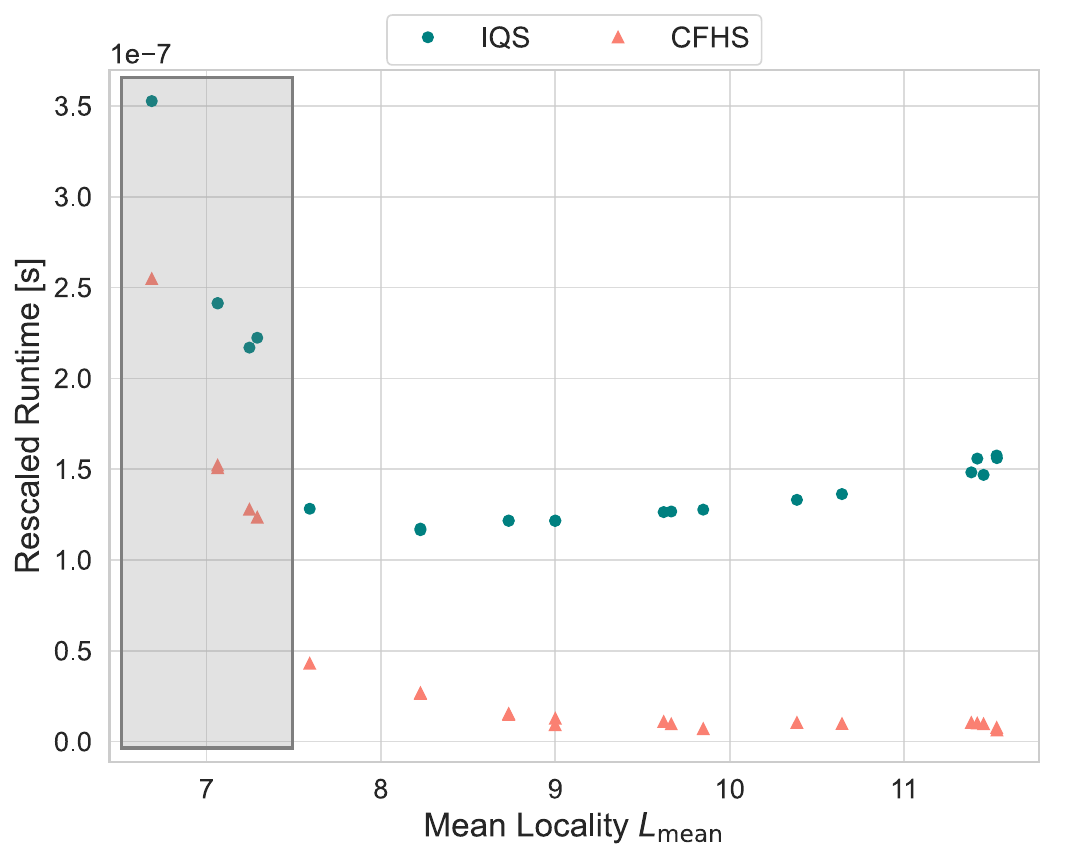}
        \caption{Rescaled runtime as in Eq.~\eqref{eqn:workload} without MPI usage.}\label{fig:workload_hamliba}
    \end{subfigure}%
    ~
    \begin{subfigure}[t]{0.49\textwidth}
        \centering
        \includegraphics[width=\textwidth]{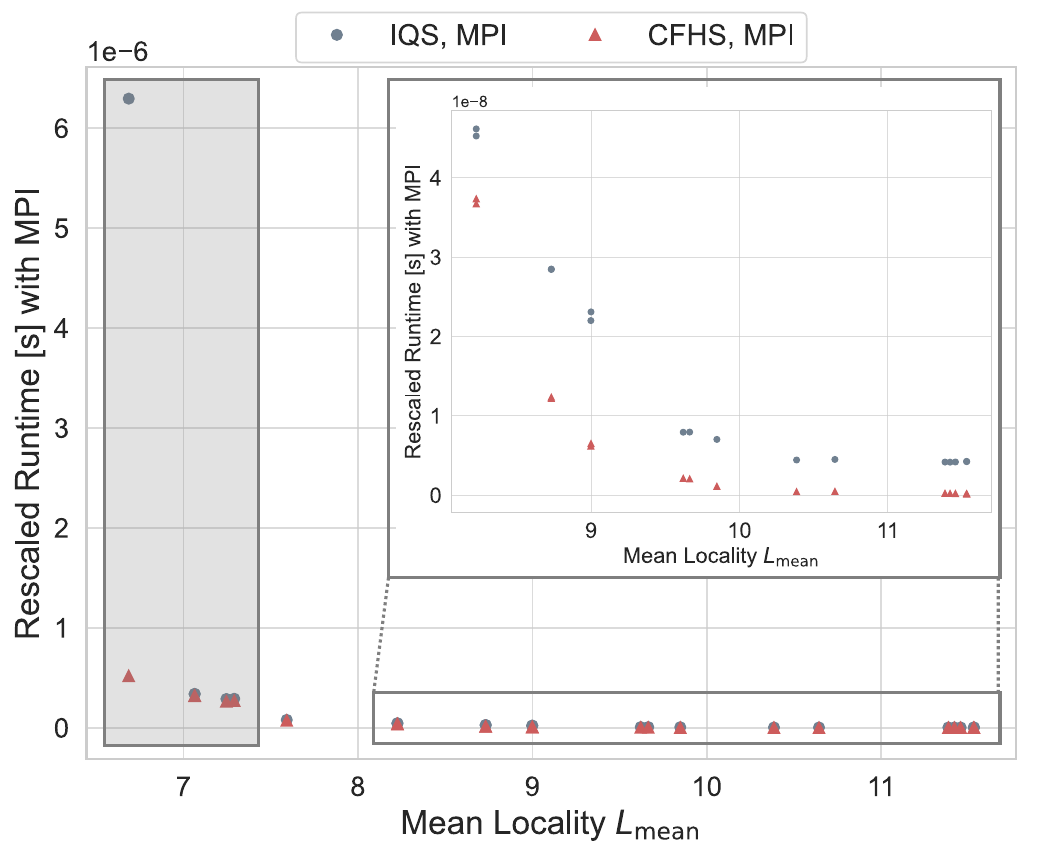}
        \caption{Rescaled runtime as in Eq.~\eqref{eqn:workload} with MPI parallelization. For better visibility, we present the results with~$L > 8$ in detail.}\label{fig:workload_hamlibb}
    \end{subfigure}
    \caption{Rescaled runtime [s] as defined in Eq.~\eqref{eqn:workload} of CFHS and IQS in dependence of the mean locality per Hamiltonian with and without MPI parallelization. Note the reduction in rescaled runtime achieved by CFHS compared to IQS in dependence on the mean locality $L_{\mathrm{mean}}$. As for the random Hamiltonians, we mark the regime in grey where the asymptotic behavior of the simulator is not yet achieved for small qubit numbers $n_{\text{qubits}}$, which again is more apparent for low locality $L$ since this is the only place where the small qubit numbers $n_{\text{qubits}}$ can contribute.}\label{fig:workload_hamlib}
\end{figure*}

\begin{landscape}
\subsection{Hamlib parameters and detailed evaluation results}\label{sec:appendix_eval}
\begin{table}[h!]
  \centering
  \caption{Numerical evaluation results and parameter configurations of various chemistry Hamiltonians. All Hamiltonians are part of HamLib~\cite{sawaya2024hamlib}. Exact Hamiltonian identifiers: $ES\_H12\_pyr =$ \texttt{ES\_H12\_pyramid\_R1.4\_sto\-6g\_ham}, $ES\_H12\_lin =$ \texttt{ES\_H12\_linear\_R2.0\_sto\-6g\_ham}, $ES\_H14\_ring =$ \texttt{ES\_H14\_ring\_R1.8\_sto\-6g\_ham}. The Hamiltonians are chosen from HamLib in order to cover a broad range with respect to the presented parameters: number of qubits ($n_{qubits}$), number of terms ($n_{\text{terms}}$). We present the compilation $t_{comp}$ and simulation time $t_{sim}$ results for IQS and CFHS. The uppercase abbreviation MPI denotes whether the Message-Passing Interface (MPI) was used for parallelization or not.}
  \label{tab:scores}
  \input{table.tex}

\end{table}
\end{landscape}

%% file: figures/CFHS_gate_flow.tex
\begin{tikzpicture}

\node[draw,
    rounded rectangle,
    minimum width=2.5cm,
    minimum height=1cm,
    fill=teal!20] (compute) {compute $G\left(F, \Phi \right)$};

\node[draw,
    rounded rectangle,
    minimum width=2.8cm,
    minimum height=2cm,
    below=of compute,
    fill=etonblue!20] (isclif) {Is gate $G$ Clifford?};

\node[draw,
    rectangle,
    below left=of isclif,
    fill=black!3] (applyclif) {$F \gets G(F)$};

\node[draw,
    rectangle,
    right =of isclif,
    fill=black!3] (getaxis) { $\sigma \gets \text{axis}(G)$ %
    $i \gets \text{qubit}(G)$ };

\node[draw,
    rounded rectangle,
    minimum width=2.8cm,
    minimum height=2cm,
    below =of getaxis,
    fill=etonblue!20] (iswhat1q) {What is $G$?};

\node[draw,
    rectangle,
    below left =of iswhat1q,
    fill=black!3] (applyprep) {$\Phi \gets \prep_{(\eff \sigma_i, \eff \overline \sigma_i)}(\Phi)$};

\node[draw,
    rectangle,
    below  =of iswhat1q,
    fill=black!3] (getangle) { $\theta \gets \text{angle}(G)$};

\node[draw,
    rectangle,
    below =of getangle,
    fill=black!3] (applyrot) {$\Phi \gets R_{\eff \sigma_i}(\theta)(\Phi)$};

\node[draw,
    rectangle,
    below right =of iswhat1q,
    fill=black!3] (applymeas) {$\Phi \gets \meas_{\eff \sigma_i}(\Phi)$};

\node[draw,
    rounded rectangle,
    minimum width=2.5cm,
    minimum height=1cm,
    below = 35em of compute,
    fill=teal!20] (return) {Return $\left(F, \Phi\right)$};

\node[draw,
    coordinate,
    below =of applyrot,
    fill=black!3] (endfullstate){};

\node[draw,
    coordinate,
    below = 32 em of compute] (end){};

\draw[-latex] (compute) edge (isclif);

\draw[-latex] (isclif) -| (applyclif)
    node[pos=0.25,fill=white,inner sep=0]{Yes};

\draw[-latex] (isclif) edge node[pos=0.4,fill=white,inner sep=0]{No} (getaxis)
(getaxis) edge (iswhat1q);

\draw[-latex] (iswhat1q) -| (applyprep)
    node[pos=0.25,fill=white,inner sep=0]{prep};

\draw[-latex] (iswhat1q) edge node[pos=0.25,fill=white,inner sep=0]{rot} (getangle)
(getangle) edge (applyrot);

\draw[-latex] (iswhat1q) -| (applymeas)
    node[pos=0.25,fill=white,inner sep=0]{meas};

\draw (applyclif) |- (end);

\draw (applyprep) |- (endfullstate);
\draw (applyrot) -- (endfullstate);
\draw (applymeas) |- (endfullstate);
\draw (endfullstate) |- (end);
\draw[-latex] (end) -- (return);

\end{tikzpicture}

%% file: table.tex
\rowcolors{2}{lightgray!40}{white}
\renewcommand{\arraystretch}{1.2} %

\begin{tabular}{@{} lrrrrrrrrrrrrr @{}}
\rowcolor{lightgray}
Name & $n_{\text{qubits}}$ & $n_{\text{terms}}$ & IQS $t_{comp}$  & IQS $t_{sim}$ & IQS $t_{comp}^{\text{MPI}}$  & IQS $t_{sim}^{\text{MPI}}$ & CFHS $t_{comp}$  & CFHS $t_{sim}$  & CFHS$_{\text{MPI}}$ $t_{comp}^{\text{MPI}}$  & CFHS $t_{sim}^{\text{MPI}}$ &  $L_{mean}$ & $L_{std}$ & $L_{max}$ \\
Li2 & 14 & 670 & 4.25 & 3.87 & 4.6 & 69.11 & 4.6 & 2.8 & 2.79 & 5.7 & 6.69 & 3.21 & 14 \\
CH & 14 & 1086 & 3.86 & 4.29 & 3.89 & 6.03 & 3.71 & 2.68 & 3.8 & 5.66 & 7.06 & 3.08 & 14 \\
NH & 14 & 1086 & 3.85 & 4.3 & 3.82 & 6.05 & 3.66 & 2.71 & 3.88 & 5.72 & 7.06 & 3.08 & 14 \\
NaLi & 14 & 1270 & 4.34 & 4.63 & 4.4 & 6.08 & 4.18 & 2.57 & 4.32 & 5.63 & 7.29 & 3.4 & 14 \\
OH & 14 & 1290 & 4.43 & 4.58 & 4.48 & 6.12 & 4.29 & 2.7 & 4.49 & 5.59 & 7.25 & 3.4 & 14 \\
Be2 & 16 & 1177 & 4.31 & 9.89 & 4.37 & 6.18 & 4.3 & 3.31 & 4.39 & 5.6 & 7.59 & 3.72 & 16 \\
HF & 16 & 2329 & 8.18 & 17.9 & 8.29 & 7.04 & 7.87 & 4.08 & 8.33 & 5.7 & 8.23 & 3.2 & 16 \\
BH & 16 & 2329 & 8.14 & 17.77 & 8.27 & 6.91 & 7.93 & 4.01 & 8.38 & 5.6 & 8.23 & 3.2 & 16 \\
C2 & 18 & 1884 & 7.33 & 60.03 & 7.23 & 14.05 & 7.01 & 7.29 & 7.04 & 6.02 & 8.73 & 3.83 & 18 \\
N2 & 18 & 1884 & 7.13 & 60.1 & 7.09 & 14.07 & 6.86 & 7.54 & 7.17 & 6.1 & 8.73 & 4.25 & 18 \\
HF & 18 & 3772 & 13.65 & 120.33 & 13.92 & 22.84 & 13.23 & 12.58 & 13.64 & 6.41 & 9 & 3.51 & 18 \\
NaLi & 18 & 3720 & 13.51 & 118.48 & 13.49 & 21.45 & 13.07 & 8.92 & 13.56 & 6.02 & 9 & 3.58 & 18 \\
NaLi & 20 & 5799 & 22.66 & 776.52 & 22.34 & 42.71 & 21.51 & 42.08 & 22.25 & 6.8 & 9.85 & 3.96 & 20 \\
H2 & 20 & 2951 & 11.87 & 391 & 11.68 & 24.54 & 11 & 33.62 & 11.13 & 6.54 & 9.62 & 4.25 & 20 \\
Na2 & 20 & 2951 & 11.87 & 392.02 & 11.69 & 24.6 & 11.18 & 29.64 & 11.46 & 6.31 & 9.66 & 4.71 & 20 \\
F2 & 22 & 4578 & 19.09 & 2616.76 & 19.31 & 86.42 & 18.27 & 186.39 & 19.17 & 8.91 & 10.64 & 4.58 & 22 \\
O3 & 22 & 5466 & 22.26 & 3051.4 & 22.09 & 101.57 & 21.15 & 237.89 & 22.62 & 9.85 & 10.38 & 4.6 & 22 \\
NaLi & 24 & 12685 & 61.45 & 33215.06 & 58.61 & 900.93 & 59.47 & 1318.17 & 59.2 & 29.95 & 11.53 & 4.71 & 24 \\
Na2 & 24 & 6509 & 29.06 & 17201.2 & 28.76 & 463.83 & 27.8 & 828.43 & 29.52 & 20.48 & 11.53 & 4.91 & 24 \\
O3 & 24 & 7881 & 34.55 & 19598.68 & 34.73 & 551.62 & 33.47 & 1363.18 & 35.01 & 30.98 & 11.38 & 5.05 & 24 \\
ES\_H12\_\textit{pyr} & 24 & 12453 & 61.5 & 32556.15 & 57.35 & 868.54 & 58.09 & 2090.97 & 57.09 & 44.09 & 11.42 & 4.54 & 24 \\
ES\_H12\_\textit{lin} & 24 & 14905 & 71.9 & 36719.32 & 70.64 & 1045.36 & 66.87 & 2406.17 & 70.48 & 50.31 & 11.45 & 4.76 & 24 \\
\end{tabular}